\newtheorem{assumption}{Assumption}
\def\nN{\mathcal{N}}
\def\fJ{J}
\def\fF{\mathcal{F}}
\def\R{\mathbb{R}}
\def\N{\mathbb{N}}
\def\G{\mathcal{G}}
\def\K{\mathbf{K}}
\def\L{\mathbf{L}}
\def\S{\mathbf{S}}
\newcommand{\eE}{\mathbb{E}}
\DeclareMathOperator*{\argmin}{arg\,min}
\DeclareFontFamily{U}{matha}{\hyphenchar\font45}
\DeclareFontShape{U}{matha}{m}{n}{
<-6> matha5 <6-7> matha6 <7-8> matha7
<8-9> matha8 <9-10> matha9
<10-12> matha10 <12-> matha12
}{}
\DeclareSymbolFont{matha}{U}{matha}{m}{n}
\DeclareFontFamily{U}{mathx}{\hyphenchar\font45}
\DeclareFontShape{U}{mathx}{m}{n}{
<-6> mathx5 <6-7> mathx6 <7-8> mathx7
<8-9> mathx8 <9-10> mathx9
<10-12> mathx10 <12-> mathx12
}{}
\DeclareSymbolFont{mathx}{U}{mathx}{m}{n}
\DeclareMathDelimiter{\vvvert} {0}{matha}{"7E}{mathx}{"17}
\DeclarePairedDelimiterX{\normiii}[1]
{\vvvert}
{\vvvert}
{\ifblank{#1}{\:\cdot\:}{#1}}
\begin{document}

\title{Equivariant Denoisers for Image Restoration}
\author{Marien Renaud\thanks{corresponding author : marien.renaud@math.u-bordeaux.fr}\inst{1,2} \and
Arthur Leclaire\inst{1} \and
Nicolas Papadakis\inst{2}}

\institute{LTCI, T\'el\'ecom Paris, IP Paris \and Univ. Bordeaux, CNRS, INRIA, Bordeaux INP, IMB, UMR 5251, F-33400 Talence, France, }

\maketitle

\begin{abstract}
One key ingredient of image restoration is to define a realistic prior on clean images to complete the missing information in the observation.
State-of-the-art restoration methods rely on a neural network to encode this prior. 
Moreover, typical image distributions are invariant to some set of transformations, such as rotations or flips.
However, most deep architectures are not designed to represent an invariant image distribution.
Recent works have proposed to overcome this difficulty by including equivariance properties within a Plug-and-Play paradigm.
In this work, we propose a unified framework named Equivariant Regularization by Denoising (ERED) based on equivariant denoisers and stochastic optimization. We analyze the convergence of this algorithm and discuss its practical benefit.

\keywords{Image restoration, optimization, equivariance, plug-and-play.}
\end{abstract}

\section{Introduction}
Image restoration aims at recovering a proper image $x \in \R^d$ from a degraded observation $y \in \R^m$. 
A model to obtain $y$ from $x$ can be defined as $y \sim \mathcal{N}(\mathcal{A}(x))$,
where $\mathcal{A}: \R^d \to \R^m$ is a deterministic operation on $x$ and $\mathcal{N}$ is a noise distribution that corresponds to a known physical degradation. Typically, linear degradations with Gaussian noise can be written as $y = A x + n$, with $A \in \R^{m \times d}$ and $n \sim \nN(0, \sigma_y^2 I_m)$.

\textbf{The regularization choice} 
In this context, the restoration task can be reformulated as a variational problem
\begin{align}\label{eq:ideal_opt_pblm}
    \argmin_{x \in \R^d} \fF(x) = f(x) + \lambda r(x),
\end{align}
where the data-fidelity term $f = - \log{p(y|\cdot)}$ is the log-likelihood representing the degradation, while $r = - \log p$ is the regularization term encoding the prior distribution $p$, i.e. the model for clean images. 
The choice of the regularization $r$ is crucial to complete the missing information in the observation $y$, e.g. by enforcing piecewise constant images~\cite{rudin1992nonlinear} or wavelet sparsity~\cite{mallat1999wavelet}. To regularize the problem,  recent generic methods rely on the use of an external denoiser within the Plug-and-Play (PnP) framework~\cite{venkatakrishnan2013plug}. 
This denoiser $D_{\sigma}$, parametrized by a noise level $\sigma$, is generally a pre-trained deep neural network~\cite{zhang2021plug}.

\textbf{Image invariances}  Existing trained denoisers do not encode invariances on the underlying prior distribution. In fact, image distributions can be expected to be invariant to some transformations such as rotations or flips. Recent works develop Plug-and-Play restoration methods that try to enforce invariances on the underlying prior in order to improve restoration~\cite{celledoni2021equivariant,chen2021equivariant,fu2023rotationequivariantproximaloperator,terris2024equivariant,tachella2023equivariantbootstrappinguncertaintyquantification,mbakam2024empirical}.

\textbf{Stochastic version of PnP} Another way to increase the performance of PnP focuses on the development of stochastic versions of PnP~\cite{laumont2023maximum}. Stochasticity helps to reduce the computational cost~\cite{sun2019block,tang2020fast}.
Moreover, stochastic algorithms might compute better approximate solutions of problem~\eqref{eq:ideal_opt_pblm}~\cite{hu2024stochasticdeeprestorationpriors,renaud2024plug}.

\textbf{Contributions} \textbf{(1)} In this paper, we propose the unified  $\pi$-equivariance formalism for the PnP framework, which generalizes so-called equivariant PnP~\cite{terris2024equivariant} and other stochastic versions of PnP~\cite{hu2024stochasticdeeprestorationpriors,renaud2024plug}. \textbf{(2)} We give theoretical insights on the convergence guarantees of the proposed Equivariant Regularization by Denoising (ERED, Algorithm~\ref{alg:ERED}). \textbf{(3)} We study the convergence of the ERED critical points (Problem~\ref{eq:equivariant_opt_pblm}) when the denoiser parameter $\sigma$ goes to zero, in the case of a $\pi$-equivariant prior $p$. \textbf{(4)} We provide numerical experiments and comparisons for image restoration tasks, and discuss the practical benefits of equivariant approaches.

\section{Background on Regularization by Denoising}
In order to solve problem~\eqref{eq:ideal_opt_pblm}, when $f$ and $r$ are differentiable, one can use a gradient descent algorithm. However, the gradient of $r$, i.e. the score of the prior distribution $p$ of clean images $s := \nabla \log p = - \nabla r$ is unknown. The authors of~\cite{romano2017little} proposed to make the following approximation 
\begin{align}
    \nabla r = - \nabla \log p \approx \nabla r_{\sigma} := - \nabla \log p_{\sigma},
\end{align} 
where $p_{\sigma} = \nN_{\sigma} \ast p$ is the convolution of $p$ with the Gaussian $\nN_{\sigma}$ with $0$-mean and $\sigma^2 I_d$-covariance matrix. This is motivated by the Tweedie formula~\cite{efron2011tweedie}, which makes $\nabla \log p_{\sigma}$ tractable
\begin{align}\label{eq:tweedie_formula}
    -\nabla \log p_{\sigma}(x) = \frac{1}{\sigma^2} \left( x - D_{\sigma}^\ast(x) \right),
\end{align}
where $D_{\sigma}^\star$ is the Minimum Mean Square Error (MMSE) denoiser defined by $D_{\sigma}^\star(\tilde x) := \eE [x | \tilde x ] = \int_{\R^d}{x p(x | \tilde x) dx}$,
for $\tilde x = x + \epsilon \text{ with } x \sim p(x), \epsilon \sim \nN(0, \sigma^2 I_d)$.
This approximation leads to the Regularization By Denoising (RED) iterations:
\begin{align}
    x_{k+1} &=  x_k - \delta \nabla f(x_k) - \delta \frac{\lambda}{\sigma^2} \left( x_k - D_{\sigma}(x_k)\right),
\end{align}
where $D_\sigma$ is a denoiser that is designed to approximate $D_{\sigma}^{\star}$.

The performance of RED can be improved by slightly modifying the algorithm with stochastic schemes~\cite{terris2024equivariant,renaud2024plug}, which incorporate invariance properties and enhance details in the restoration. In the next section, we propose a unified framework to generalize these approaches.

\section{$\pi$-Equivariant Regularization by Denoising}
We now  introduce an extension of RED that is deduced from a notion of invariance, named $\pi$-equivariance, on the underlying prior. \vspace*{0.3cm} 

\noindent {\em Notations}
We denote by $g : \R^d \to \R^d$ a differentiable transformation of $\R^d$; by $\G$ a measurable set of transformations of $\R^d$, and by $G \sim \pi$ a random variable of law $\pi$ on $\G$.

\subsection{$\pi$-equivariant image distributions}
\begin{definition}[Invariance]\label{def:invariance}
    A density $p$ on $\R^d$ is said to be invariant to a set of transformations $\G$ if $\forall g \in \G$, $p = p \circ g$ a.e.
\end{definition}

As shown in~\cite{lenc2015understanding}, natural images densities tend to be invariant to some set of transformations such as rotations or flips have been studied in.

\begin{definition}[$\pi$-equivariance]\label{def:equivariance}
    A density $p$ on $\R^d$ is said to be $\pi$-equivariant if 
    $\eE_{G \sim \pi}[|\log(p \circ G)|] < \infty$
    and 
    $\log p = \eE_{G \sim \pi}\left[ \log(p \circ G) \right]$.
\end{definition}

Definition~\ref{def:equivariance} relaxes the notion of invariance for a density in the following sense. If a density $p$ is invariant to each $g \in \G$, $p$ is $\pi$-equivariant, whatever the distribution $\pi$ on $\G$. 

\begin{remark}[Key identity]\label{remark:score_invariance}
For $p \in \mathcal{C}^1(\R^d, \R_{+}^\ast)$ and $g \in \mathcal{C}^1 (\R^d, \R^d)$, we have
\begin{align}\label{eq:score_composition}
    \nabla \log (p \circ g)(x)
    &= \frac{\nabla (p \circ g)(x)}{(p \circ g)(x)} = \frac{\fJ_g^T(x)\nabla p(g(x))}{(p \circ g)(x)} = \fJ_g^T(x) (\nabla \log p)(g(x)),
\end{align}
with $x \in \R^d$. Thus, if $p$ is $\pi$-equivariant, then $s=-\nabla \log p$, the score of $p$,  verifies the identity $s = \eE_{G \sim \pi}\left( \fJ_G^T (s \circ G) \right)$.

\end{remark}

In the context of equivariant transforms, Remark~\ref{remark:score_invariance} suggests to apply the Jacobian of the transformation $g$ instead of the inverse of the transformation $g^{-1}$ as it is done in existing  works~\cite{mbakam2024empirical,tachella2023equivariantbootstrappinguncertaintyquantification,terris2024equivariant,herbreteau2024normalization}. Here the score~\eqref{eq:score_composition} can be computed for any general differentiable transformation $g$, even if $g^{-1}$ does not exist.

\subsection{Equivariant regularization}
In order to encode the desired equivariance property in the regularization $r$, we introduce the \textbf{Equivariant REgularization by Denoising (ERED)} $ r_{\sigma}^{\pi}$ and the associated score $s_{\sigma}^{\pi}$ respectively defined by
\begin{align}
    r_{\sigma}^{\pi}(x) &:= -\eE_{G \sim \pi} \left( \log (p_{\sigma} \circ G)(x) \right) \label{eq:eq_reg} \\
    s_{\sigma}^{\pi}(x) &:= -\eE_{G \sim \pi} \left(\fJ_G^T(x) (\nabla \log p_{\sigma})(G(x)) \right).\label{eq:equivariant score}
\end{align}

Note that under regularity assumptions on $\pi$ and $p_{\sigma} \circ G$ (e.g. $\mathcal{G}$ finite and $p_{\sigma} \circ G$ differentiable), we get $s_{\sigma}^{\pi} = \nabla r_{\sigma}^{\pi}$.
Thanks to the Tweedie formula~\eqref{eq:tweedie_formula}, $s_{\sigma}^{\pi}$ can be computed with an MMSE denoiser
\begin{align}\label{eq:equiv_score}
    s_{\sigma}^{\pi}(x) &= \eE_{G \sim \pi} \left(\frac{1}{\sigma^2}  \fJ_G^T(x) \left(G(x) - D_{\sigma}^\ast(G(x)) \right) \right).
\end{align}
For a given denoiser $D_{\sigma}$, e.g. a supervised neural network, we can thus introduce the \textit{equivariant denoiser} $\tilde D_{\sigma}$ defined~by
\begin{align}\label{eq:equiv_denoiser}
    \tilde D_{\sigma}(x) = \eE_{G \sim \pi} \left[ \fJ_{G}^T(x) D_{\sigma} \left( G (x) \right) \right].
\end{align}

Since the exact MMSE denoiser $D_{\sigma}^\ast$ is not tractable, we make the following approximation:
\begin{align*}
    s_{\sigma}^{\pi}(x) = \frac{1}{\sigma^2} \left( \eE_{\pi} \left[\fJ_G^T(x) G(x) \right] - \tilde D_{\sigma}^\ast(x)  \right) \approx \frac{1}{\sigma^2} \left( \eE_{\pi} \left[\fJ_{G}^T(x) G(x) \right] - \tilde D_{\sigma}(x) \right).
\end{align*}

No special structure is required on $\G$ or $\pi$, thus making the  ERED framework a generic construction. 
However, in order to ensure that the ERED $ r_{\sigma}^{\pi}$ is indeed $\pi$-equivariant (Definition~\ref{def:equivariance}), more structure is required on $\G$ and $\pi$. 
In Proposition~\ref{prop:haar}, a sufficient condition on $\G$ and $\pi$ is provided for this property to hold.
Before stating this result, let us recall that a compact Hausdorff topological group admits a unique right-invariant probability measure $\pi$, called Haar measure, that satisfies, for any integrable function $\varphi : \G \to \R$ and for any $g \in \G$, $\int_{\G} \varphi(g(x)) d\pi(x) = \int_{\G} \varphi(x) d\pi(x)$~\cite{haar1933massbegriff,neumann1935haarschen}. For a finite group $\G$, the Haar measure is the counting measure.

\begin{proposition}\label{prop:haar}
    If $\G$ is a compact Hausdorff topological group and $\pi$ the associated right-invariant Haar measure, then $r_{\sigma}^{\pi}$ is $\pi$-equivariant.
\end{proposition}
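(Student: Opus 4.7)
The plan is to establish the stronger fact that $r_{\sigma}^{\pi}$ is actually $\mathcal{G}$-\emph{invariant}, in the sense that $r_{\sigma}^{\pi}(h(x)) = r_{\sigma}^{\pi}(x)$ for every $h \in \mathcal{G}$ and every $x \in \mathbb{R}^d$. Once this is shown, the $\pi$-equivariance of the associated unnormalized density $e^{-r_{\sigma}^{\pi}}$ follows immediately by averaging over $h = H \sim \pi$, consistently with the remark after Definition~\ref{def:equivariance} that invariance implies $\pi$-equivariance for any choice of $\pi$.

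First, I would unfold the definition~\eqref{eq:eq_reg}: for any fixed $h \in \mathcal{G}$,
$$r_{\sigma}^{\pi}(h(x)) = -\mathbb{E}_{G \sim \pi}\bigl[\log p_{\sigma}(G(h(x)))\bigr] = -\mathbb{E}_{G \sim \pi}\bigl[\log p_{\sigma}((G \circ h)(x))\bigr],$$
using only that the action of $\mathcal{G}$ on $\mathbb{R}^d$ coincides with its group law (composition). The crucial step is then to apply the right-invariance of the Haar measure $\pi$: the map $G \mapsto G \circ h$, i.e.\ right multiplication by $h$ in the group $\mathcal{G}$, pushes $\pi$ forward to $\pi$. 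Applied to the $\pi$-integrable function $\varphi(G) := \log p_{\sigma}(G(x))$, this yields
$$\mathbb{E}_{G \sim \pi}\bigl[\log p_{\sigma}((G \circ h)(x))\bigr] = \mathbb{E}_{G \sim \pi}\bigl[\log p_{\sigma}(G(x))\bigr],$$
and hence $r_{\sigma}^{\pi}(h(x)) = r_{\sigma}^{\pi}(x)$, proving invariance.

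Finally, taking the expectation of this identity with $h = H \sim \pi$ delivers $r_{\sigma}^{\pi}(x) = \mathbb{E}_{H \sim \pi}[r_{\sigma}^{\pi}(H(x))]$, which is exactly the equivariance identity of Definition~\ref{def:equivariance} applied to $-\log e^{-r_{\sigma}^{\pi}}$; the integrability condition $\mathbb{E}_{H}[|r_{\sigma}^{\pi}\circ H|]<\infty$ is automatic because the integrand is constant in $H$.

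The main subtlety here is not analytical but interpretative: since $r_{\sigma}^{\pi}$ is a regularizer, not a log-density, the conclusion should be read as $\pi$-equivariance of the (possibly unnormalized) density $e^{-r_{\sigma}^{\pi}}$. The only technical point to check is that $G \mapsto \log p_{\sigma}(G(x))$ is $\pi$-integrable so that $r_{\sigma}^{\pi}(x)$ is well-defined and the change-of-variables step is legitimate; this is a mild assumption, satisfied in particular whenever $\mathcal{G}$ is compact and $p_{\sigma} = \mathcal{N}_{\sigma} \ast p$ is continuous and strictly positive, which holds under the standard assumptions already used to define $r_{\sigma}^{\pi}$.
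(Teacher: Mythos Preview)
Your proposal is correct and follows essentially the same route as the paper: both arguments use the right-invariance of the Haar measure on the inner integral to show that $r_{\sigma}^{\pi}(h(x))=r_{\sigma}^{\pi}(x)$ for every $h\in\G$, and then the $\pi$-equivariance identity follows by a trivial average over $h$. The paper compresses these two steps into a single chain of equalities (computing $\eE_{G'\sim\pi}[\log r_{\sigma}^{\pi}(G'(x))]$ directly), whereas you separate them explicitly; your added remarks on integrability and on reading the conclusion for the density $e^{-r_{\sigma}^{\pi}}$ rather than for $r_{\sigma}^{\pi}$ itself are reasonable clarifications that the paper leaves implicit.
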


\begin{proof}
For $x \in \R^d$, by the right-invariance of $\pi$, we get
\begin{align*}
    &\eE_{G' \sim \pi}\left( \log(r_{\sigma}^{\pi}(G'(x)) \right)= \int_{\G} \log(r_{\sigma}^{\pi}(G'(x)) d\pi(G') \\
    =& \int_{\G} \log\left( -\int_{\G}  \log (p_{\sigma} (G \circ G'(x)) d\pi(G) \right) d\pi(G') \\
    = &\int_{\G} \log\left( -\int_{\G}  \log (p_{\sigma} (G(x)) d\pi(G) \right) d\pi(G') 
     = \log\left( r_{\sigma}^{\pi}(x) \right).\qedhere
\end{align*}
\end{proof}
Let us discuss the hypothesis on the set of transformations $\G$.
First, $\G$ needs to be a group, i.e. $\forall g, g' \in \G, g^{-1} \circ g' \in \G$, to ensure that the composition $G \circ G'$ is still in $\G$ and that any transformation is invertible.
Moreover, $\G$ needs to be a Hausdorff space (i.e. $\forall g, g' \in G$, there exist two neighbourhoods $U$ and $V$ of  $x$ and $y$, such that $U \cap V = \emptyset$) and is required to be compact to ensure that $\pi$ is a probability measure.
These hypotheses on $\G$ are general and cover in particular any finite discrete $\G$.

\subsection{Examples of equivariant scores}\label{sec:equivariant_ex}
The equivariant formulation generalizes recent works in the literature that we recall here. Moreover, we present new examples of equivariant scores that are included in our general framework.

\paragraph{Finite set of isometries}
The authors of \cite{terris2024equivariant} proposed an equivariant version of PnP for a finite set of linear isometric transformations $\G$ with the uniform distribution on $\G$, i.e. $\forall g \in \G, \pi(g) = \frac{1}{|\G|}$. Since $g$ is a linear isometry, $\fJ_g^T(x) = g^{-1}$.
In this case, the $\pi$-equivariant denoiser~\eqref{eq:equiv_denoiser} and  regularization~\eqref{eq:equiv_score} are respectively defined by
\begin{align}
    \tilde D_{\sigma}(x) = \frac{1}{|\G|} \sum_{g \in \G}{ g^{-1}\left[ D_{\sigma} \left( g(x) \right) \right] }, ~~ s_{\sigma}^{\pi}(x) \approx \frac{1}{\sigma^2} \left(x - \tilde D_{\sigma}(x) \right).
\end{align}

\paragraph{Infinite set of isometries}
To generalize the previous example, we can propose an infinite set $\G$ of isometries, e.g. sub-pixel rotations for which $\pi$ can be seen as the angular distribution.
In this case, $\forall g \in \G, \forall x \in \R^d,\  \fJ_g^T(x) = g^{-1}$ and then
\begin{align}
    \tilde D_{\sigma}(x) = \eE_{G \sim \pi}\left( G^{-1}\left[ D_{\sigma} \left( G(x) \right) \right] \right),~~ s_{\sigma}^{\pi}(x) \approx \frac{1}{\sigma^2} \left(x - \tilde D_{\sigma}(x) \right).
\end{align}

\paragraph{Noising-denoising} The stochastic denoising regularization (SNORE) proposed in~\cite{renaud2024plug} consists in noising the image before denoising it. It can be interpreted as a $\pi$-equivariant denoiser for the set of translations $g_{z}(x) = x + \sigma z$, for $x, z \in \R^d$ and $\sigma > 0$ the noise level of $D_{\sigma}$, with the multivariable distribution, i.e. ${\forall z \in \R^d}$, $\pi(g_{z}) = \nN(z; 0, \sigma^2 I_d)$. The Jacobian of the translation is $\fJ_{g_z}(x) = I_d$. With this set of transformations, the $\pi$-equivariant  denoiser~\eqref{eq:equiv_denoiser} and regularization~\eqref{eq:equiv_score} can be expressed as 
\begin{align}
    \tilde D_{\sigma}(x) &= \eE_{z \sim \N(0, \sigma I_d)} \left[ D_{\sigma} \left( x + \sigma z \right) \right] \label{eq:noising_denoising_denoiser} \\
    s_{\sigma}^{\pi}(x) &\approx \frac{1}{\sigma^2} \left( \eE_{\pi} \left[x + \sigma z \right] - \tilde D_{\sigma}(x) \right) = \frac{1}{\sigma^2} \left( x - \tilde D_{\sigma}(x) \right).
\end{align}

\section{ERED algorithm: definition and analysis}

In this section, we define a generic $\pi$-equivariant PnP algorithm and demonstrate its convergence.
Instead of solving Problem~\eqref{eq:ideal_opt_pblm}, we will tackle 
\begin{align}\label{eq:equivariant_opt_pblm}
    \argmin_{x \in \R^d} \mathcal{F}_{\sigma}^{\pi}(x) := f(x) + \lambda r_{\sigma}^{\pi}(x),
\end{align}
with the equivariant regularization $r_{\sigma}^{\pi}$ defined in relation~\eqref{eq:eq_reg}.
We now introduce the equivariant Regularization by Denoising (ERED) algorithm (Algorithm~\ref{alg:ERED}), which is a biased stochastic gradient descent to solve Problem~\eqref{eq:equivariant_opt_pblm}.

\begin{algorithm}
\caption{ERED}\label{alg:ERED}
\begin{algorithmic}[1]
\STATE \textbf{Parameters:} $x_0 \in \R^d$, $\sigma > 0$, $\lambda > 0$, $\delta > 0$, $N \in \N$
\STATE \textbf{Input:} degraded image $y$
\STATE \textbf{Output:} restored image $x_{N}$
\FOR{$k = 0, 1, \dots, N-1$}
    \STATE Sample $G \sim \pi$
    \STATE $x_{k+1} = x_k - \delta \nabla f(x_k) - \frac{\delta \lambda}{\sigma^2} \fJ_{G}^T(x_k) \left( G(x_k) - D_{\sigma}(G(x_k))\right)$
\ENDFOR
\end{algorithmic}
\end{algorithm}

\begin{remark}
    If the denoiser $D_{\sigma}$ is $L$-Lipschitz then the equivariant denoiser $\tilde D_{\sigma}$ defined in relation~\eqref{eq:equiv_denoiser} is $L$-Lipschitz. 
    Previous works have shown the link between the Lipschitz constant of the denoiser and the convergence of deterministic PnP algorithms~\cite{ryu2019plug,hurault2022proximal,wei2024learning}. However, no clue indicates that this property still holds for stochastic PnP algorithms, such as Algorithm~\ref{alg:ERED}. Thus, in this paper, we provide a new analysis to demonstrate the convergence of   Algorithm~\ref{alg:ERED}.
\end{remark}

\subsection{Unbiased Convergence analysis}

In this section, we prove the convergence of the ERED (Algorithm~\ref{alg:ERED}) run with the exact MMSE denoiser~$D_{\sigma}^\star$. With this denoiser, thanks to Tweedie formula, the iterations are computed by
\begin{align}\label{eq:theoretical_process}
    x_{k+1} = x_k - \delta_k \nabla f(x_k) - \lambda \delta_k \fJ_{G}^T(x_k) \nabla \log p_{\sigma}(G (x_k)),
\end{align}
with $G \sim \pi$ and $(\delta_k)_{k \in \N} \in {\left(\R^+\right)}^{\N}$ a non-increasing sequence of step-sizes.

\begin{assumption}
    \itshape \textbf{(a)} \label{ass:step_size_decreas} The step-size decreases to zero but not too fast: $\sum_{k = 0}^{+\infty}{\delta_k} = + \infty$ and $\sum_{k = 0}^{+\infty}{\delta_k^2} < + \infty$.
    
    \itshape \textbf{(b)} \label{ass:data_fidelity_reg} The data-fidelity term $f : x \in \R^d \mapsto f(x) \in \R$ is $\mathcal{C}^{\infty}$.
    
    \itshape \textbf{(c)} \label{ass:prior_score_approx} The noisy prior score is sub-polynomial, i.e. there exist $B\in \R^+$, $\beta \in \R$ and $n_1 \in \N$ such that $\forall \sigma > 0$, $\forall x \in \R^d$, $\|\nabla \log p_{\sigma}(x)\| \le B \sigma^{\beta} (1 + \|x\|^{n_1})$.
\end{assumption}

Assumption~\ref{ass:step_size_decreas}(a) is standard in stochastic gradient descent analysis. It suggests a choice of the step-size rule to ensure convergence, for instance $\delta_k = \frac{\delta}{k^\alpha}$ with $\alpha \in ]\frac{1}{2},1]$.
Assumption~\ref{ass:data_fidelity_reg}(b) is typically verified for a linear degradation with additive Gaussian noise, i.e. $f(x) = \frac{1}{\sigma_y^2}\|y-Ax\|^2$. It ensures that the objective function of Problem~\eqref{eq:equivariant_opt_pblm} is $\mathcal{C}^{\infty}$. Under the so-called manifold hypothesis, i.e. $p$ is supported on a compact, it is shown in~\cite{debortoli2023convergence} that  Assumption~\ref{ass:prior_score_approx}(c) is verified with $n_1 = 1$ and $\beta = -2$.

\begin{assumption}\label{ass:finite_moment}
    \textbf{(a)} The random variable $\fJ_G$ has a uniform finite moment, i.e. $\exists \epsilon > 0, M_{2+\epsilon} \ge 0$ such that $\forall x \in \R^d, \eE_{G \sim \pi}(\vvvert \fJ_G(x) \vvvert^{2+\epsilon}) \le M_{2+\epsilon} < +\infty$, with $\vvvert \cdot \vvvert$ the operator norm defined for $A \in \R^{d\times d}$ by $\vvvert A \vvvert = \sup_{\|x\|=1} \frac{\|Ax\|}{\|x\|}$.

\noindent
    \textbf{(b)} The transformation has bounded moments on any compact, i.e. $\forall\K\subset \R^d$ compact, $\forall m \in \N$, $\exists C_{\K, m}< + \infty$ such that $\forall x \in \K, \eE_{G \sim \pi}(\|  G(x) \|^m) \le C_{\K, m}$.
\end{assumption}
With Assumption~\ref{ass:finite_moment}, the behavior of the algorithm is controlled on each compact set. 
This assumption is verified for all examples presented in Section~\ref{sec:equivariant_ex}.

We now define $\mathbf{S}_{\sigma} = \{x \in \R^d | \nabla \fF_{\sigma}^{\pi}(x) = 0\}$, the set of critical points of $\fF_{\sigma}^{\pi}$ in $\R^d$, and $\Lambda_{\K}$, the set of random seeds for which the iterates of the algorithm are bounded in the compact $K$, by
\begin{equation*}
    \Lambda_{\K} = \bigcap_{k \in \N}{\{x_k \in \K\}}.
\end{equation*}
We finally denote the distance of a point to a set by $d(x, \mathbf{S}) = \inf_{y \in \mathbf{S}}{\|x - y \|}$, with $x \in \R^d$ and $\mathbf{S} \subset \R^d$.
The restriction to realizations of the algorithm bounded in $\Lambda_{K}$ will be referred to as the \textit{boundedness assumption}.

\begin{proposition}\label{prop:convergence_unbiased}
    Let $(x_k)_{k \in \N}$ be the iterates generated by Algorithm~\ref{alg:ERED} with the exact MMSE Denoiser $D^*_\sigma$. Then, under Assumptions~\ref{ass:step_size_decreas}-\ref{ass:finite_moment}, we have almost surely on $\Lambda_{\K}$
    \begin{align}
        &\lim_{k \to + \infty}{d(x_k, \mathbf{S_{\sigma}})} = 0,\label{eq1a}\\
        &\lim_{k \to + \infty}{\|\nabla \fF_{\sigma}^{\pi}(x_k)\|} = 0,\label{eq1b}
    \end{align}
and $(\fF_{\sigma}^{\pi}(x_k))_{k \in \N}$ converges to a value of $\fF_{\sigma}^{\pi}(\mathbf{S_{\sigma}})$.
\end{proposition}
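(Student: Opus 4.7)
The plan is to recognize ERED with the exact MMSE denoiser $D_\sigma^\ast$ as an unbiased stochastic gradient descent on the smooth objective $\fF_\sigma^\pi$, and then to invoke the descent lemma together with the Robbins--Siegmund quasi-martingale theorem. Writing the iteration~\eqref{eq:theoretical_process} as $x_{k+1}=x_k-\delta_k V_k$ with
\[
V_k := \nabla f(x_k) + \tfrac{\lambda}{\sigma^2}\fJ_G^T(x_k)\bigl(G(x_k)-D_\sigma^\ast(G(x_k))\bigr),
\]
one checks via Tweedie's formula~\eqref{eq:tweedie_formula}, the key identity~\eqref{eq:score_composition}, and the definition~\eqref{eq:equivariant score} of $s_\sigma^\pi=\nabla r_\sigma^\pi$ (the interchange of $\nabla$ and $\eE_\pi$ being legitimate thanks to the Gaussian smoothing defining $p_\sigma$) that $\eE[V_k\mid x_k]=\nabla \fF_\sigma^\pi(x_k)$, so $\xi_{k+1}:=V_k-\nabla \fF_\sigma^\pi(x_k)$ is a martingale difference with respect to the filtration generated by the iterates.

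The next step is to bound $\eE[\|V_k\|^2\mid x_k]$ uniformly on $\K$. On the compact $\K$, $\|\nabla f\|$ is bounded by Assumption~\ref{ass:data_fidelity_reg}(b); applying Tweedie to rewrite the score part and invoking Assumption~\ref{ass:prior_score_approx}(c) yields a pointwise estimate of the form $B\sigma^\beta\vvvert\fJ_G(x)\vvvert\bigl(1+\|G(x)\|^{n_1}\bigr)$. Squaring and applying Hölder's inequality with conjugate exponents $p=(2+\epsilon)/2$ and $q=(2+\epsilon)/\epsilon$, Assumption~\ref{ass:finite_moment}(a) controls the Jacobian moment while Assumption~\ref{ass:finite_moment}(b) controls the polynomial in $\|G(x)\|$, producing a constant $C_\K<\infty$ with $\eE[\|V_k\|^2\mid x_k]\le C_\K$ whenever $x_k\in\K$. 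Since $f\in\mathcal{C}^\infty$ and $r_\sigma^\pi$ inherits $\mathcal{C}^\infty$ smoothness from the Gaussian convolution defining $p_\sigma$, $\nabla \fF_\sigma^\pi$ is Lipschitz with some constant $L_\K$ on a compact neighborhood of $\K$.

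Applying the descent lemma on $\Lambda_\K$ and taking conditional expectation gives
\[
\eE\bigl[\fF_\sigma^\pi(x_{k+1})\bigm| x_k\bigr]\le \fF_\sigma^\pi(x_k)-\delta_k\|\nabla \fF_\sigma^\pi(x_k)\|^2+\tfrac{L_\K C_\K}{2}\delta_k^2.
\]
Combined with $\sum_k\delta_k^2<\infty$ from Assumption~\ref{ass:step_size_decreas}(a), the Robbins--Siegmund theorem yields, almost surely on $\Lambda_\K$, convergence of $(\fF_\sigma^\pi(x_k))_{k\in\N}$ to some $c\in\R$ and $\sum_k\delta_k\|\nabla \fF_\sigma^\pi(x_k)\|^2<\infty$. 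With $\sum_k\delta_k=\infty$, one deduces $\liminf_k\|\nabla \fF_\sigma^\pi(x_k)\|=0$. The upgrade to $\lim_k\|\nabla \fF_\sigma^\pi(x_k)\|=0$, i.e.~\eqref{eq1b}, uses $\|x_{k+1}-x_k\|\le\delta_k\|V_k\|\to0$ almost surely (from the $L^2$ bound and $\delta_k\to0$) together with the $L_\K$-Lipschitzness of $\nabla \fF_\sigma^\pi$, via the classical argument ruling out $\limsup_k\|\nabla \fF_\sigma^\pi(x_k)\|>0$. Boundedness of $(x_k)$ in $\K$ then yields~\eqref{eq1a}, since every subsequential limit $x^\star$ satisfies $\nabla \fF_\sigma^\pi(x^\star)=0$ and hence lies in $\mathbf{S}_\sigma$; continuity of $\fF_\sigma^\pi$ forces $c=\fF_\sigma^\pi(x^\star)\in \fF_\sigma^\pi(\mathbf{S}_\sigma)$.

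The main technical hurdle is the second-moment bound on $V_k$: the stochastic score couples the random Jacobian $\fJ_G$ with a polynomial in $\|G(x)\|$, and since Assumption~\ref{ass:finite_moment}(a) only supplies a $(2+\epsilon)$-moment on $\vvvert\fJ_G\vvvert$, the Hölder exponents and the polynomial degree controlled by Assumption~\ref{ass:finite_moment}(b) must be matched carefully. Everything else is a routine non-convex stochastic approximation argument.
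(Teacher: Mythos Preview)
Your proof is correct but follows a genuinely different route from the paper's. The paper verifies the hypotheses of a black-box biased-SGD convergence theorem of Tadi\'c~\cite{tadic2017asymptotic} (recalled in Appendix~\ref{app:thm}): $\mathcal{C}^p$ smoothness of $\fF_\sigma^\pi$ with $p>d$, a uniform second-moment bound on the martingale noise $\xi_k$ (their Lemma~\ref{lemma:bounded_variance_1}, obtained via Young's inequality rather than your H\"older splitting), and control of the partial sums $\sum\delta_k\xi_k$ via Doob's maximal inequality. You instead run the direct non-convex SGD machinery: descent lemma plus Robbins--Siegmund, then the classical liminf-to-lim upgrade. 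Your approach is more self-contained and needs only local $\mathcal{C}^{1,1}$ regularity rather than the dimension-dependent $\mathcal{C}^p$, $p>d$; it also makes explicit that the second-moment bound is the only substantive step. Conversely, the paper's route handles the restriction to $\Lambda_{\K}$ cleanly inside the cited theorem, whereas your descent-lemma/Robbins--Siegmund step on $\Lambda_{\K}$ implicitly requires a stopping-time localization (since $\Lambda_{\K}$ is not $\mathcal{F}_k$-measurable and $x_{k+1}$ need not lie in $\K$ before you condition on $\Lambda_{\K}$); this is routine but you should make it explicit. Your H\"older pairing with exponents $(2+\epsilon)/2$ and $(2+\epsilon)/\epsilon$ is the natural counterpart of the paper's Young-inequality computation and uses Assumptions~\ref{ass:finite_moment}(a)--(b) in exactly the same way.
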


\begin{proof}
The proof is obtained by applying  \cite[Theorem 2.1, (ii)] {tadic2017asymptotic}, which is recalled in Appendix~\ref{app:thm} (Theorem~\ref{theorem:tadic}) for completeness. To do so, we have to verify the different assumptions (Assumption~\ref{ass:tadic} in Appendix~\ref{app:thm}) of this theorem. 
First, Assumption~\ref{ass:tadic}(a) is verified by Assumption~\ref{ass:step_size_decreas}(a). 
Next $p_{\sigma}$ is $\mathcal{C}^{\infty}$ by convolution with a Gaussian. Then $\log p_{\sigma}$ is also $\mathcal{C}^{\infty}$~\cite{laumont2023maximum} and so is $r_{\sigma}^{\pi}$ defined in relation~\eqref{eq:eq_reg}. By Assumption~\ref{ass:data_fidelity_reg}(b), $\mathcal{F}_{\sigma}^{\pi} = f + \lambda r_{\sigma}^{\pi}$ is $\mathcal{C}^{\infty}$. So Assumption~\ref{ass:tadic}(c) is verified.

Now, we are going to prove that Assumption~\ref{ass:tadic}(b) is verified, i.e. the noise fluctuation can be controlled.
To that end, we define 
\begin{align*}
\xi_k &= \nabla f(x_k) + \lambda \fJ_{G}^T(x_k) \nabla \log p_{\sigma}(G (x_k)) - \nabla \fF_{\sigma}^{\pi} (x_k)\\& = \lambda \left( \fJ_{G}^T(x_k) \nabla \log p_{\sigma}(G (x_k)) - \eE_{G \sim \pi}\left(\fJ_{G}^T(x_k) \nabla \log p_{\sigma}(G(x_k)) \right) \right),\end{align*} 
with $\lambda >0, G \sim \pi$. By definition we have $\eE(\xi_k) = 0$ and, from~\eqref{eq:theoretical_process}, we get
\begin{align*}
    x_{k+1} = x_k - \delta_k (\nabla \fF_{\sigma}^{\pi}(x_k) + \xi_k).
\end{align*}

\begin{lemma}\label{lemma:bounded_variance_1}
    Under Assumptions~\ref{ass:prior_score_approx}-\ref{ass:finite_moment}, almost surely on $\Lambda_{\K}$, there exists $C > 0$, such that $\forall k \in \N$, $\eE(\|\xi_k\|^2) \le C$.
\end{lemma}

Lemma~\ref{lemma:bounded_variance_1} is demonstrated in Appendix~\ref{sec:proof_bounded_variance_1}. By using Lemma~\ref{lemma:bounded_variance_1}, we get $\sum_{k \in \N}{\delta_k^2 \eE(\|\xi_k\|^2)} \le C \sum_{k \in \N}{\delta_k^2} < + \infty$, by Assumption~\ref{ass:step_size_decreas}. Then, we deduce from the Doob inequality (which holds because $\sum_{k=n}^l\xi_k$ is a martingale) that
\begin{align}\label{eq:doob_inequality}
    \eE\left(\sup_{n \le l \le m}{\| \sum_{k=n}^l{\delta_k \xi_k} \|^2}\right) \le 4 \sum_{k = n}^m {\delta_k^2 \eE\left(\|\xi_k\|^2\right)}.
\end{align}
By the monotone convergence theorem, this implies
\begin{equation*}
    \eE\left(\sup_{n \le l}{\| \sum_{k=n}^l{\delta_k \xi_k} \|^2}\right) \le 4 \sum_{k = n}^{\infty} {\delta_k^2 \eE\left(\|\xi_k\|^2\right)} 
    \leq 4 C \sum_{k = n}^{\infty} \delta_k^2 .
\end{equation*}
Thus the sequence $(\sup_{n \le l}{\| \sum_{k=n}^l{\delta_k \xi_k} \|^2})_n$ tends to zero in $L^1$ and also almost surely (because it is non-increasing). The square function being non-decreasing, it implies that $(\sup_{n \le l}{\| \sum_{k=n}^l{\delta_k \xi_k} \|})_n$ tends to zero almost surely.
The process~\eqref{eq:theoretical_process} thus verifies Assumption~\ref{ass:tadic}(b) almost surely. 
We can apply Theorem~\ref{theorem:tadic}, which concludes the proof.
\end{proof}

\subsection{Biased convergence analysis}
In this section, we analyse the convergence of the ERED algorithm (Algorithm~\ref{alg:ERED}) run with a realistic denoiser $D_{\sigma} \neq D_{\sigma}^\ast$. In this case, ERED is a biased stochastic gradient descent for solving Problem~\eqref{eq:equivariant_opt_pblm}. At each iteration, the  algorithm writes
\begin{align}\label{eq:algo_biased}
    x_{k+1} = x_k - \delta_k \nabla f(x_k) - \frac{\delta_k \lambda}{\sigma^2} \fJ_{G}^T(x_k) \left( G(x_k) - D_{\sigma}(G(x_k))\right),
\end{align}
with $G \sim \pi$. Defining the gradient estimator $$\xi_{k} = \nabla f(x_k) + \frac{\lambda}{\sigma^2} \fJ_{G}^T(x_k) \left( G(x_k) - D_{\sigma}(G(x_k))\right) - \nabla \fF_{\sigma}^{\pi}(x_k),$$ the algorithm~\eqref{eq:algo_biased} can be reformulated as 
\begin{align}
    x_{k+1} = x_k - \delta_k \left( \nabla \fF_{\sigma}^{\pi}(x_k) + \xi_k \right).
\end{align}

\begin{assumption}\label{ass:denoiser_sub_polynomial}
    The realistic denoiser $D_{\sigma}$ is sub-polynomial, i.e. $\exists C > 0$ and $n_2 \in N$ such that $\forall x \in \R^d$, $\|D_{\sigma}(x)\| \le C (1 + \|x\|^{n_2})$.
\end{assumption}

Assumption~\ref{ass:denoiser_sub_polynomial} is similar to Assumption~\ref{ass:prior_score_approx}(c) but it applies on the denoiser instead of the score of the underlying prior distribution. Notice that in this case the constant $C$ might depend on the noise level $\sigma$. 
As an example, a bounded denoiser~\cite{chan2016plug} verifies Assumption~\ref{ass:denoiser_sub_polynomial} with $n_2 = 1$.

\begin{assumption}
    \label{ass:g_bounded}
    For every compact $\K$, there exists $C_{\K}$, such that 
    $\forall x \in \K$, $\forall g \in \G, \|g(x)\| \le C_{\K}$. 
\end{assumption}
Assumption~\ref{ass:g_bounded} is verified as soon as $(g, x) \in \G \times \R^d \to g(x) \in \R^d$ is continuous and $\G$ compact. It is verified in particular for $\G$ being a finite set of isometries.

\begin{proposition}\label{prop:convergence_biaised}
    Let  $(x_k)_{k \in \N}$ be the sequence provided by ERED (Algorithm~\ref{alg:ERED}) with an inexact denoiser $D_{\sigma}$. Then,  under Assumptions~\ref{ass:step_size_decreas}-\ref{ass:denoiser_sub_polynomial},  there exists $M_{\K}$ such that, almost surely on $\Lambda_{\K}$:
    \begin{align}
        \limsup_{k \to \infty} \|\nabla \fF_{\sigma}^{\pi}(x_k)\| &\le M_{\K} \eta^{\frac{1}{2}} \label{eq1}\\
        \limsup_{k \to \infty} \fF_{\sigma}^{\pi}(x_k) - \liminf_{k \to \infty} \fF_{\sigma}^{\pi}(x_k) &\le M_{\K} \eta,\label{eq2}
    \end{align}
    with the asymptotic bias $\eta = \limsup_{k \to \infty} \|\eE(\xi_k)\|$.

    Moreover, under Assumption~\ref{ass:g_bounded}, we have
    \begin{align}\label{ineq}
        \eta \le \frac{\lambda}{\sigma^2} \sup_{x \in \K}\eE \left( \vvvert \fJ_G(x)\vvvert\right) \|D_{\sigma} - D_{\sigma}^\ast\|_{\infty, \L},
    \end{align}
    with $\L = \mathcal{B}(0, C_{\K})$, where $C_{\K}$ is introduced in Assumption~\ref{ass:g_bounded}.
\end{proposition}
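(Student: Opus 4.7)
The plan is to mirror the proof of Proposition~\ref{prop:convergence_unbiased} but to invoke the \emph{biased} counterpart, \cite[Theorem 2.1(i)]{tadic2017asymptotic} (Theorem~\ref{theorem:tadic}), which delivers exactly the perturbation bounds~\eqref{eq1}--\eqref{eq2} in terms of the asymptotic bias $\eta$ once its martingale-noise hypotheses are verified. To that end I would split $\xi_k = \zeta_k + b_k$ with $b_k = \eE(\xi_k \mid x_k)$ the conditional bias and $\zeta_k = \xi_k - b_k$ a martingale-difference sequence, then treat each piece separately.

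For the martingale part I would prove the direct analogue of Lemma~\ref{lemma:bounded_variance_1}: on $\Lambda_{\K}$ there exists $C>0$ with $\eE(\|\zeta_k\|^2)\le C$ for all $k$. The estimator now involves $\frac{\lambda}{\sigma^2}\fJ_G^T(x_k)(G(x_k)-D_{\sigma}(G(x_k)))$ instead of the exact score, so a bound on $\eE(\|\xi_k\|^2)$ should be obtained by a H\"older-type pairing of the $(2+\epsilon)$-moment of the Jacobian (Assumption~\ref{ass:finite_moment}(a)) with the polynomial growth of $G(x_k)$ on $\K$ (Assumption~\ref{ass:finite_moment}(b)) and of $D_{\sigma}(G(x_k))$ (Assumption~\ref{ass:denoiser_sub_polynomial}); then $\eE(\|\zeta_k\|^2)\le \eE(\|\xi_k\|^2)$. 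Doob's inequality together with $\sum_k \delta_k^2 < \infty$ (Assumption~\ref{ass:step_size_decreas}(a)) yields $\sup_{l\ge n}\|\sum_{k=n}^l \delta_k \zeta_k\|\to 0$ almost surely, exactly as in~\eqref{eq:doob_inequality}. I expect this bounded-variance step to be the main technical obstacle, because the absence of the Tweedie cancellation available in the unbiased proof forces one to control the raw residual $G(x_k)-D_{\sigma}(G(x_k))$ via sub-polynomial growth alone, while simultaneously pairing it against a Jacobian with only a $(2+\epsilon)$ moment.

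With the martingale part controlled and $\|b_k\|\le \eta + o(1)$ by definition of $\eta$, the hypotheses of the biased Tadic theorem hold on $\Lambda_{\K}$, yielding~\eqref{eq1} and~\eqref{eq2} with a constant $M_{\K}$ depending only on $\K$ through the local regularity of $\nabla \fF_{\sigma}^{\pi}$, which is $\mathcal{C}^{\infty}$ by the same smoothing argument as in the unbiased proof.

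For the explicit bound~\eqref{ineq}, I would compute $\eE(\xi_k\mid x_k)$ directly. Rewriting $\nabla \fF_{\sigma}^{\pi}$ via Tweedie~\eqref{eq:tweedie_formula} makes the data-fidelity and $\fJ_G^T(x_k) G(x_k)$ contributions cancel, leaving
\begin{equation*}
\eE(\xi_k\mid x_k) = \frac{\lambda}{\sigma^2}\,\eE_{G\sim\pi}\bigl[\fJ_G^T(x_k)\bigl(D_{\sigma}^{\ast}(G(x_k))-D_{\sigma}(G(x_k))\bigr)\bigr].
\end{equation*}
Assumption~\ref{ass:g_bounded} forces $G(x_k)\in \L=\mathcal{B}(0,C_{\K})$ whenever $x_k\in\K$, so the denoiser difference is pointwise at most $\|D_{\sigma}-D_{\sigma}^{\ast}\|_{\infty,\L}$; factoring it out, using $\|\fJ_G^T(x_k) v\|\le \vvvert\fJ_G(x_k)\vvvert\,\|v\|$, taking the supremum over $x\in\K$ and passing to the limsup gives~\eqref{ineq}.
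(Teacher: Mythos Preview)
Your proposal is correct and follows essentially the same route as the paper: split $\xi_k$ into a centered martingale part and a bias part, establish a uniform second-moment bound on $\Lambda_{\K}$ via Assumptions~\ref{ass:finite_moment}--\ref{ass:denoiser_sub_polynomial} (this is the paper's Lemma~\ref{lemma:bounded_variance_2}), feed that into Doob's inequality together with $\sum_k\delta_k^2<\infty$, and invoke Theorem~\ref{theorem:tadic}; the explicit bound~\eqref{ineq} is then obtained exactly as you indicate, by expanding $\eE(\xi_k\mid x_k)$ with Tweedie, cancelling, and using Assumption~\ref{ass:g_bounded} to localize $G(x_k)$ in $\L$. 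The only cosmetic discrepancy is that the paper cites part~(ii) of \cite[Theorem~2.1]{tadic2017asymptotic} rather than~(i).
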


\begin{proof}
First the bias is denoted by $\eta_k = \eE(\xi_k)$ and the noise by ${\gamma_k = \xi_k - \eE(\xi_k)}$. So we have $\xi_k = \gamma_k + \eta_k$ and $\eE(\gamma_k) = 0$.
We apply again~Theorem~\ref{theorem:tadic}. Assumptions~\ref{ass:tadic}(a)-\ref{ass:tadic}(c) are verified thanks to Assumptions~\ref{ass:step_size_decreas}(a) and~\ref{ass:data_fidelity_reg}(b).

\begin{lemma}\label{lemma:bounded_variance_2}
    Under Assumptions~\ref{ass:finite_moment}-\ref{ass:denoiser_sub_polynomial}, almost surely on $\Lambda_{\K}$, there exists $C_2 > 0$, such that $\forall k \in \N$, $\eE(\|\xi_k\|^2) \le C_2$.
\end{lemma}

Lemma~\ref{lemma:bounded_variance_2} is proved in Section~\ref{sec:proof_bounded_variance_2}. Then, by using Lemma~\ref{lemma:bounded_variance_2} and the Doob inequality as in relation~\eqref{eq:doob_inequality}, we demonstrated Assumption~\ref{ass:tadic}(b) of \cite{tadic2017asymptotic} (i.e. the noise fluctuation is controlled) and we can apply Theorem~\ref{theorem:tadic} to obtain equations~\eqref{eq1}-\eqref{eq2}.
Under Assumption~\ref{ass:g_bounded}, we study the asymptotic behavior of $\eta_k$,
\begin{align*}
    \|\eta_k\| &= \|\eE(\xi_k)\| = \frac{\lambda}{\sigma^2}\|\eE\left( \fJ_G^T(x_k) \left(D_{\sigma} - D^{\ast}_{\sigma}\right)(G(x_k)) \right)\| \\
    &\le \frac{\lambda}{\sigma^2} \eE\left( \vvvert \fJ_G(x_k) \vvvert \|\left(D_{\sigma} - D^{\ast}_{\sigma}\right)(G(x_k))\| \right).
\end{align*}
By Assumption~\ref{ass:g_bounded}, because $x_k \in \K$, we know that $G(x_k) \in \L = \mathcal{B}(0, C_{\K})$. So we have $\|\left(D_{\sigma} - D^{\ast}_{\sigma}\right)(G(x_k))\| \le \|D_{\sigma} - D^{\ast}_{\sigma}\|_{\infty, \L}$ and the desired inequality~\eqref{ineq}.
\end{proof}

\subsection{Critical points analysis - Geometrical invariant case}
A critical point analysis for the noising-denoising case, presented in equation~\eqref{eq:noising_denoising_denoiser}, is provided in~\cite{renaud2024plug} . 
However, to the best of our knowledge, no critical point analysis has been provided so far even in the case of geometrical invariance, i.e. $p \circ g = p$ for $g \in \G$. Here, we fill this gap by studying critical points under the relaxed $\pi$-equivariance property.
First, we study the approximation $- \nabla \log p \approx s_{\sigma}^{\pi}$ when $\sigma \to 0$. Next, we deduce that critical points of Problem~\eqref{eq:equivariant_opt_pblm} converge to critical points of Problem~\eqref{eq:ideal_opt_pblm} when $\sigma \to 0$.

\begin{assumption}
     \textbf{(a)} \label{ass:p_regularity} The prior distribution $p \in \mathrm{C}^1(\R^d, ]0, +\infty[)$ with $\|p\|_{\infty} + \|\nabla p\|_{\infty} < + \infty$.
\noindent
    \textbf{(b)} \label{ass:first_moment_finite}
    $\fJ_G$ has finite first moment, i.e. $\sup_{x \in \R^d}\eE_{G \sim \pi}(\vvvert \fJ_G(x)\vvvert) < +\infty$.
\end{assumption}
Assumption~\ref{ass:p_regularity}(a) is needed to ensure that $\nabla \log p$ is well defined.
Assumption~\ref{ass:first_moment_finite}(b) is verified in particular for a finite set of transformations, for a set of linear isometries or for the noising-denoising regularization.

\begin{proposition}\label{prop:uniform_cvg}
    Under Assumptions~\ref{ass:g_bounded}-\ref{ass:first_moment_finite}, for every compact $\K \subset \R^d$, if the prior $p$ is $\pi$-equivariant, we have, when $\sigma \to 0$,
    \begin{equation}
        \|s - s_{\sigma}^{\pi}\|_{\infty, \K} \to 0.
    \end{equation}
\end{proposition}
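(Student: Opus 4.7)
The plan is to combine the key identity from Remark~\ref{remark:score_invariance} with the classical uniform convergence of Gaussian-smoothed functions on compact sets. The whole argument reduces the problem to showing that $\nabla \log p_\sigma$ converges to $\nabla \log p$ uniformly on a compact slightly larger than $\K$.

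First, since $p$ is $\pi$-equivariant, Remark~\ref{remark:score_invariance} (applied to $\log p$) gives the representation
\begin{equation*}
    s(x) = -\eE_{G\sim\pi}\bigl[\fJ_G^T(x)\,(\nabla\log p)(G(x))\bigr].
\end{equation*}
Subtracting this from the definition \eqref{eq:equivariant score} of $s_{\sigma}^{\pi}$ yields the clean decomposition
\begin{equation*}
    s(x) - s_{\sigma}^{\pi}(x) = \eE_{G\sim\pi}\bigl[\fJ_G^T(x)\,(\nabla\log p_{\sigma}-\nabla\log p)(G(x))\bigr].
\end{equation*}

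Second, fix a compact $\K\subset\R^d$. By Assumption~\ref{ass:g_bounded}, for every $x\in\K$ and every $g\in\G$, $g(x)\in \L:=\overline{\mathcal B}(0,C_{\K})$. Passing the norm inside the expectation and using the bound $\|\fJ_G^T(x)v\|\le \vvvert\fJ_G(x)\vvvert\,\|v\|$ together with Assumption~\ref{ass:first_moment_finite}(b), we obtain
\begin{equation*}
    \sup_{x\in\K}\|s(x)-s_{\sigma}^{\pi}(x)\|\le M\,\bigl\|\nabla\log p_\sigma-\nabla\log p\bigr\|_{\infty,\L},
\end{equation*}
with $M:=\sup_{x\in\R^d}\eE_{G\sim\pi}[\vvvert\fJ_G(x)\vvvert]<\infty$. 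The proof is therefore reduced to the uniform convergence on the fixed compact $\L$.

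Finally, I prove that $\|\nabla\log p_{\sigma}-\nabla\log p\|_{\infty,\L}\to 0$ as $\sigma\to 0$. Since $p_{\sigma}=p*\nN_\sigma$ and $\nabla p_{\sigma}=(\nabla p)*\nN_\sigma$, Assumption~\ref{ass:p_regularity}(a) gives that $p$ and $\nabla p$ are bounded and continuous, hence uniformly continuous on any compact containing $\L$; the standard approximation-of-identity estimate then yields $p_{\sigma}\to p$ and $\nabla p_{\sigma}\to\nabla p$ uniformly on $\L$. Moreover, $p$ is continuous and strictly positive on the compact $\L$, so $p_{\min}:=\min_{\L}p>0$, and by uniform convergence $p_{\sigma}\ge p_{\min}/2$ on $\L$ for $\sigma$ small enough. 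Writing
\begin{equation*}
    \nabla\log p_{\sigma}-\nabla\log p=\frac{\nabla p_\sigma-\nabla p}{p_\sigma}+(\nabla p)\,\frac{p-p_\sigma}{p\,p_\sigma},
\end{equation*}
the uniform lower bound on $p_{\sigma}$ and $p$, together with the uniform boundedness of $\nabla p$ and the uniform convergences $p_{\sigma}\to p$, $\nabla p_{\sigma}\to\nabla p$ on $\L$, gives the conclusion.

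The only delicate step is the last one: passing from uniform convergence of $p_\sigma$ and $\nabla p_\sigma$ to uniform convergence of the ratios $\nabla p_\sigma / p_\sigma$. This is controlled by the uniform positivity of $p$ on the compact $\L$, which is in turn guaranteed by the fact that Assumption~\ref{ass:g_bounded} confines the orbit of $\K$ under $\G$ to a bounded set — without this confinement one could not avoid regions where $p$ vanishes.
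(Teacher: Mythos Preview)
Your proof is correct and follows the same route as the paper: use the $\pi$-equivariance identity for $s$, subtract from $s_\sigma^\pi$, confine $G(x)$ to the compact $\L$ via Assumption~\ref{ass:g_bounded}, and pull out the Jacobian moment via Assumption~\ref{ass:first_moment_finite}(b). The only difference is that the paper invokes Proposition~1 of \cite{laumont2023maximum} for the uniform convergence $\|\nabla\log p_\sigma-\nabla\log p\|_{\infty,\L}\to 0$, whereas you supply a self-contained approximation-of-identity argument; your version is correct and arguably more informative, since it makes explicit where each part of Assumption~\ref{ass:p_regularity}(a) (positivity, boundedness of $p$ and $\nabla p$, $\mathcal C^1$-regularity) is used.
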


\begin{proof}
    Due to the $\pi$-equivariance of $p$, we have $s = \eE_{G \sim \pi}\left(\fJ_{G}^T (s \circ G)  \right)$ with a random variable $G \sim \pi$. With the definition of $s_{\sigma}^{\pi}$~\eqref{eq:equiv_score}, we get for $x \in \R^d$
\begin{align}
    &s_{\sigma}^{\pi}(x) - \nabla \log p(x) = \eE_{G \sim \pi} \left(\fJ_G^T(x) \left(\nabla \log p_{\sigma} - \nabla \log p \right)(G(x))\right).
\end{align}
By Assumption~\ref{ass:g_bounded}, we get that $\forall x \in \K, G(x) \in \mathcal{B}(0, C_{\K})$, with $\L = \mathcal{B}(0, C_{\K})$ the closed ball of center $0$ and radius $C_{\K}$. By Assumption~\ref{ass:p_regularity}(a) and Proposition 1 in~\cite{laumont2023maximum}, we know that $\|\nabla \log p_{\sigma} - \nabla \log p\|_{\infty, \L} \to 0$ when $\sigma \to 0$.

Then,  when $\sigma \to 0$, using Assumption~\ref{ass:g_bounded}, we obtain
\begin{align*}
    \|s - s_{\sigma}^{\pi}\|_{\infty, \K} &\le \eE_{G \sim \pi} \left(\vvvert\fJ_G\vvvert \|\nabla \log p_{\sigma} - \nabla \log p \|_{\infty, \L}\right) \\
    &\le \|\nabla \log p_{\sigma} - \nabla \log p \|_{\infty, \L} \eE_{G \sim \pi}\left(\vvvert\fJ_G\vvvert \right) \to 0. \qedhere
\end{align*}
\end{proof}

From Proposition~\ref{prop:uniform_cvg}, we now deduce  a  critical point convergence when $\sigma \to 0$ in the sense of Kuratowski~\cite{chambolle20231d}.

\begin{assumption}\label{ass:data_fidelity_diff}
    The data-fidelity term in~\eqref{eq:ideal_opt_pblm} is continuously differentiable, i.e. $f \in \mathcal{C}^1(\R^d, \R)$.
\end{assumption}
Assumption~\ref{ass:data_fidelity_diff} is needed to define the critical points of Problem~\eqref{eq:ideal_opt_pblm}. It is verified for a large set of inverse problems, including linear inverse problems with Gaussian noise, phase retrieval or despeckling.

We denote by $\mathbf{S}^\ast$ the set of critical point of $\fF$, 
$\mathbf{S}_{\sigma}$ the set of critical points of $\fF_{\sigma}^{\pi}$,  and $\mathbf{S}$ the limit point of $\mathbf{S}_{\sigma}$ when $\sigma \to 0$, more precisely
\begin{align}
    \mathbf{S} = \{ x \in \R^d | \exists \sigma_n>0 \text{ decreasing to } 0, x_n \in \mathbf{S}_{\sigma_n} \text{ such that } x_n \xrightarrow[n \to \infty]{} x\}.
\end{align}

\begin{proposition}\label{prop:critical_points_cvg}
    Under Assumptions~\ref{ass:g_bounded}-\ref{ass:data_fidelity_diff}, if the prior $p$ is $\pi$-equivariant, we have
    \begin{align}
        \mathbf{S} \subset \mathbf{S}^\ast.
    \end{align}
\end{proposition}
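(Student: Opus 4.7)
The plan is to fix $x \in \mathbf{S}$, pick (by definition of $\mathbf{S}$) a decreasing sequence $\sigma_n \downarrow 0$ together with $x_n \in \mathbf{S}_{\sigma_n}$ such that $x_n \to x$, and then pass to the limit in the Euler equation for $\fF_{\sigma_n}^{\pi}$. Under the standing regularity on $\pi$ already recorded after \eqref{eq:equiv_score}, we have $s_{\sigma}^{\pi} = \nabla r_{\sigma}^{\pi}$, so $x_n \in \mathbf{S}_{\sigma_n}$ reads
\begin{equation}
    \nabla f(x_n) + \lambda\, s_{\sigma_n}^{\pi}(x_n) = 0,
\end{equation}
and the target is $\nabla f(x) + \lambda\, s(x) = 0$ (with $s = -\nabla \log p$ as in Remark~\ref{remark:score_invariance}), which is precisely the defining equation of $\mathbf{S}^{\ast}$.

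First I would deduce $\nabla f(x_n) \to \nabla f(x)$ directly from Assumption~\ref{ass:data_fidelity_diff}. Next, Assumption~\ref{ass:p_regularity}(a) guarantees $p \in \mathcal{C}^{1}$ with $p > 0$, hence $s = -\nabla p / p$ is continuous on $\R^d$ and $s(x_n) \to s(x)$. Finally, I would fix a compact neighborhood $\K$ of $x$; since $x_n \to x$, we have $x_n \in \K$ for $n$ large, and Proposition~\ref{prop:uniform_cvg} gives $\|s - s_{\sigma_n}^{\pi}\|_{\infty,\K} \to 0$. The triangle inequality
\begin{equation}
    \|s_{\sigma_n}^{\pi}(x_n) - s(x)\| \le \|s_{\sigma_n}^{\pi} - s\|_{\infty,\K} + \|s(x_n) - s(x)\|
\end{equation}
then yields $s_{\sigma_n}^{\pi}(x_n) \to s(x)$. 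Passing to the limit in the Euler equation above gives $\nabla f(x) + \lambda s(x) = 0$, i.e.\ $x \in \mathbf{S}^{\ast}$, which is the desired inclusion.

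The only mild obstacle is that two limits happen simultaneously, the regularization parameter $\sigma_n \to 0$ and the argument $x_n \to x$, and they must be linked through a single uniform estimate. Restricting to a fixed compact neighborhood of $x$ makes Proposition~\ref{prop:uniform_cvg} applicable on $\K$ uniformly in $n$, and the displayed triangle inequality cleanly decouples the two convergences. No further work beyond the continuity of $\nabla f$ and of $s$ (both already available under the assumptions) is required, so the argument stays short.
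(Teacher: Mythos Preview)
Your proof is correct and follows essentially the same route as the paper: pick a compact set containing the convergent sequence $(x_n)$, invoke Proposition~\ref{prop:uniform_cvg} for uniform convergence of the scores on that compact, and pass to the limit in the critical-point equation using continuity of $\nabla f$ and of $s$. The only cosmetic difference is that the paper works directly with $\nabla \fF_{\sigma_n}^{\pi}$ and $\nabla \fF$ (so the $\nabla f$ terms cancel before the estimate), whereas you split the gradient into the data-fidelity and regularization parts and treat them separately via your triangle inequality; the content is the same.
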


\begin{proof}
    For $x \in \S$, we have $\sigma_n >0$ decreasing to $0$ and $x_n \in \S_{\sigma_n}$ such that $x_n \to x$. Because $x_n$ is a converging sequence, there exists a compact $\K$ such that $\forall n \in \N, x_n \in \K$. Moreover, thanks to Proposition~\ref{prop:uniform_cvg}, $\|s - s_{\sigma_n}^{\G}\|_{\infty, \K} \xrightarrow[n \to \infty]{} 0$ and then $\|\fF - \fF_{\sigma_n}^{\pi}\|_{\infty, \K} \to 0$. So, $\|\fF(x_n) - \fF_{\sigma_n}^{\pi}(x_n)\| \to 0$ and by definition $\fF_{\sigma_n}^{\pi}(x_n) = 0$ which gives $\|\fF(x_n)\| \to 0$. It implies that $\|\nabla \fF(x)\| = 0$.
\end{proof}
\vspace*{-0.25cm}
\section{Experiments}

In this section, we evaluate the practical gain of Algorithm~\ref{alg:ERED} for image restoration. We focus on image deblurring with various blur kernels including fixed and motion kernels as proposed by~\cite{hurault2022gradient}. The denoiser used in these experiments is the Gradient-Step DRUNet denoiser proposed by~\cite{hurault2022gradient} with the provided weights, obtained with supervised training on natural color images. This denoiser reaches state-of-the-art denoising performance.

\begin{figure*}[!ht]
    \centering
    \includegraphics[width=\textwidth]{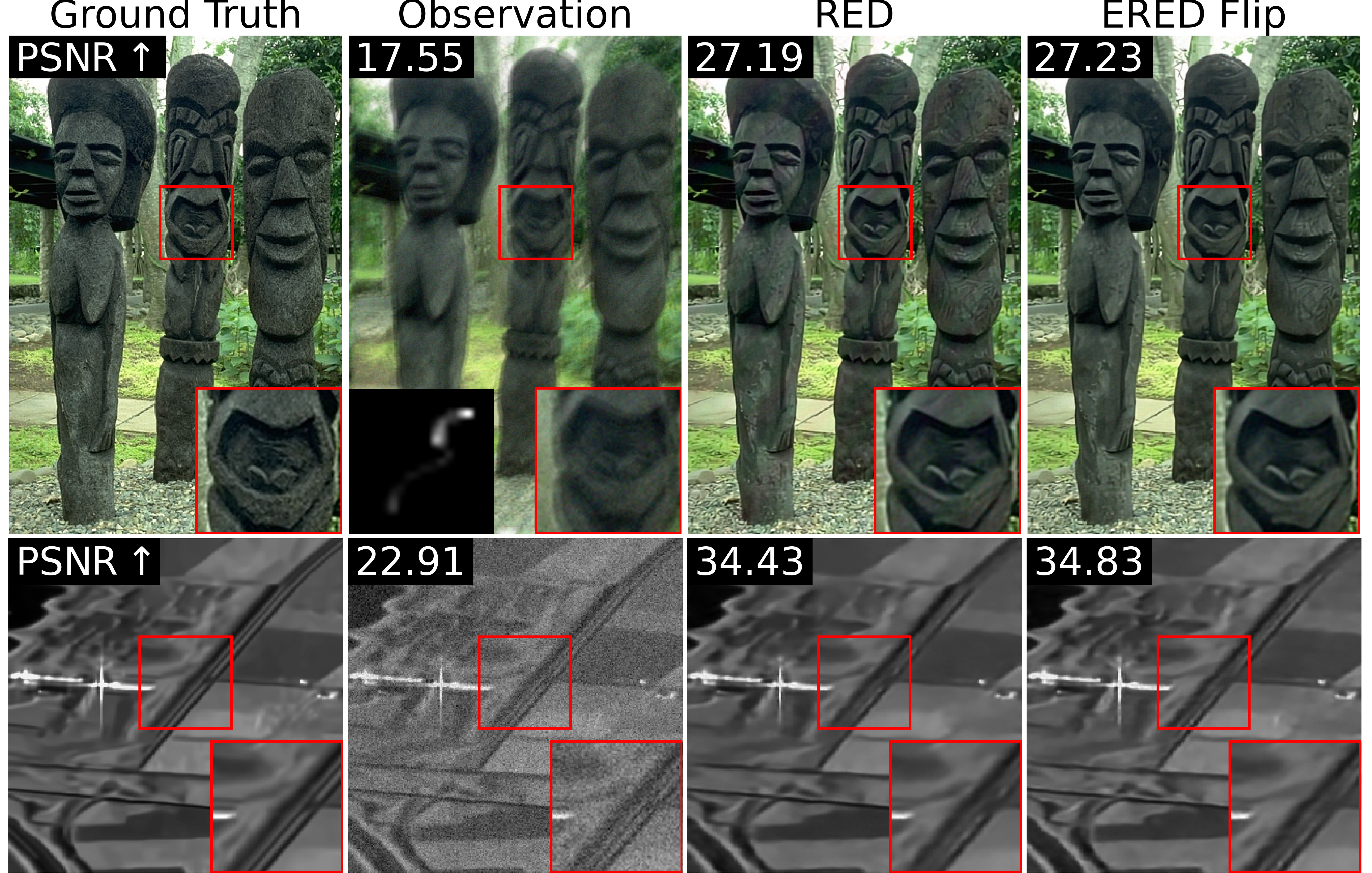}
    \caption{Deblurring (a motion blur kernel with input noise level $\sigma_{y} = 5 / 255$) and despeckling (number of looks $50$) with RED and ERED with a GS-denoiser trained on natural images or SAR images (respectively). The set of transformations for ERED is \textit{random flip}.  ERED produces a better qualitative result than RED.  \vspace*{-0.4cm}
    }
    \label{fig:restauration}
\end{figure*}

On Figure~\ref{fig:restauration}, we observe that ERED outperforms the standard RED algorithm. The set of transformations for ERED is \textit{random flip} of the two image axes. Especially, it succeeds to reduce artifacts generated by RED. In fact, ERED might impose more geometrical properties on the restored image due to its intrinsic equivariance.

\begin{wraptable}[13]{l}{0.47\textwidth}\label{table:quantitative_results}\vspace{-0.4cm}
\centering
\resizebox{0.9\linewidth}{!}{
\begin{tabular}{ c c c c }
 Method & PSNR$\uparrow$ & SSIM$\uparrow$ & N$\downarrow$\\
\hline
RED~\cite{romano2017little} & 32.25 & 0.84 & 400 \\
ERED rotation~\cite{terris2024equivariant} & \underline{32.53} & 0.85 & 400 \\
ERED translation & 32.44 & 0.85  & 400 \\
ERED flip & 32.51 & 0.85 & 400 \\
ERED subpixel rotation & 32.32 & 0.85 & 400\\
ERED all transformations & 31.94 & 0.83 & 400 \\
SNORE~\cite{renaud2024plug} & 32.45 & \underline{0.86} & 1000 \\
Annealed SNORE~\cite{renaud2024plug} & \textbf{32.89} & \textbf{0.87} & 1500\\
 \hline
\end{tabular}
} \vspace*{-0.25cm}
\caption{Quantitative comparison of image deblurring methods on $10$ images from CBSD68 dataset with $10$ different blur kernels (fixed and motion kernel of blur) and a noise level $\sigma_{y}  = 5 / 255$.
Best and second best results are respectively displayed in bold and underlined.}
\label{table:deblurring}
\end{wraptable}

On Table~\ref{table:quantitative_results}, we present Algorithm~\ref{alg:ERED} performance with various sets of transformations including random rotation of angle $\theta \in \{0,1,2,3\} \times \pi / 2$ named \textit{rotation}, random subpixel rotation of angle $\theta \in [-\pi, \pi]$ named \textit{subpixel rotation} using raster rotation~\cite{paeth1990fast}, random flip of the two axis named \textit{flip}, random translation along both axis named \textit{translation}, random Gaussian noising named \textit{SNORE}, and a random transformation taken randomly in between all the previous sets of transformations (including SNORE) named \textit{all transformations}. We also present performances of RED and Annealed SNORE, another version of SNORE~\cite{renaud2024plug} where the denoiser parameter $\sigma$ decreases through iterations. We notice that ERED  obtains better quantitative performance than RED ($+0.3$dB) with the same computational cost. Moreover, the choice of the transformation impacts the restoration quality and our experiments suggest that flips and rotations are beneficial.

Additional experiments are provided in Appendix~\ref{sec:additional_expe} on denoising, super-resolution, despeckling and deblurring with other denoisers.

\section{Conclusion}
In this paper, we propose ERED, an equivariant version of RED. We provide an interpretation of the ERED algorithm as an equivariant property of the underlying prior. We give theoretical convergence results (Propositions~\ref{prop:convergence_unbiased}-\ref{prop:convergence_biaised})  and a critical point convergence with an equivariant prior $p$ (Proposition~\ref{prop:critical_points_cvg}). Experimental results illustrate the modest improvement brought by such methods.

\section*{Acknowledgements}
This study has been carried out with financial support from the French Direction G\'en\'erale de l’Armement. Experiments presented in this paper were carried out using the PlaFRIM experimental testbed, 
supported by Inria, CNRS (LABRI and IMB), Universite de Bordeaux, Bordeaux INP and Conseil Regional d’Aquitaine (see https://www.plafrim.fr).

\bibliography{ref}

\begin{thebibliography}{10}

\bibitem{celledoni2021equivariant}
E.~Celledoni, M.~J. Ehrhardt, C.~Etmann, B.~Owren, C.-B. Sch{\"o}nlieb, and F.~Sherry.
\newblock Equivariant neural networks for inverse problems.
\newblock {\em IP}, 2021.

\bibitem{chambolle20231d}
A.~Chambolle, V.~Duval, and J.~M. Machado.
\newblock 1d approximation of measures in wasserstein spaces.
\newblock {\em arXiv preprint arXiv:2304.14781}, 2023.

\bibitem{chan2016plug}
S.~H. Chan, X.~Wang, and O.~A. Elgendy.
\newblock Plug-and-play admm for image restoration: Fixed-point convergence and applications.
\newblock {\em IEEE TCI}, 3(1):84--98, 2016.

\bibitem{chen2021equivariant}
D.~Chen, J.~Tachella, and M.~E. Davies.
\newblock Equivariant imaging: Learning beyond the range space.
\newblock In {\em ICCV}, pages 4379--4388, 2021.

\bibitem{dalsasso2021if}
E.~Dalsasso, L.~Denis, and F.~Tupin.
\newblock As if by magic: self-supervised training of deep despeckling networks with merlin.
\newblock {\em IEEE TGRS}, 60:1--13, 2021.

\bibitem{debortoli2023convergence}
V.~{{De Bortoli}}.
\newblock Convergence of denoising diffusion models under the manifold hypothesis.
\newblock {\em Transactions on Machine Learning Research}, 2022.
\newblock Expert Certification.

\bibitem{tadic2017asymptotic}
A.~Doucet and V.~Tadic.
\newblock Asymptotic bias of stochastic gradient search.
\newblock {\em Annals of Applied Probability}, 27(6), 2017.

\bibitem{efron2011tweedie}
B.~Efron.
\newblock Tweedie’s formula and selection bias.
\newblock {\em Journal of the American Statistical Association}, 106(496):1602--1614, 2011.

\bibitem{fu2023rotationequivariantproximaloperator}
J.~Fu, Q.~Xie, D.~Meng, and Z.~Xu.
\newblock Rotation equivariant proximal operator for deep unfolding methods in image restoration.
\newblock {\em IEEE TPAMI}, 2023.

\bibitem{haar1933massbegriff}
A.~Haar.
\newblock Der massbegriff in der theorie der kontinuierlichen gruppen.
\newblock {\em Annals of mathematics}, 34(1):147--169, 1933.

\bibitem{herbreteau2024normalization}
S.~Herbreteau, E.~Moebel, and C.~Kervrann.
\newblock Normalization-equivariant neural networks with application to image denoising.
\newblock {\em NeurIPS}, 36, 2024.

\bibitem{hu2024stochasticdeeprestorationpriors}
Y.~Hu, A.~Peng, W.~Gan, P.~Milanfar, M.~Delbracio, and U.~S. Kamilov.
\newblock Stochastic deep restoration priors for imaging inverse problems, 2024.

\bibitem{hurault2022gradient}
S.~Hurault, A.~Leclaire, and N.~Papadakis.
\newblock Gradient step denoiser for convergent plug-and-play.
\newblock In {\em International Conference on Learning Representations}, 2022.

\bibitem{hurault2022proximal}
S.~Hurault, A.~Leclaire, and N.~Papadakis.
\newblock Proximal denoiser for convergent plug-and-play optimization with nonconvex regularization.
\newblock In {\em ICML}. PMLR, 2022.

\bibitem{laumont2023maximum}
R.~Laumont, V.~De~Bortoli, A.~Almansa, J.~Delon, A.~Durmus, and M.~Pereyra.
\newblock On map estimation with pnp priors and stochastic gradient descent.
\newblock {\em JMIV}, 2023.

\bibitem{lenc2015understanding}
K.~Lenc and A.~Vedaldi.
\newblock Understanding image representations by measuring their equivariance and equivalence.
\newblock In {\em Proceedings of the IEEE conference on computer vision and pattern recognition}, pages 991--999, 2015.

\bibitem{mallat1999wavelet}
S.~Mallat.
\newblock {\em A wavelet tour of signal processing}.
\newblock Elsevier, 1999.

\bibitem{martin2001}
D.~Martin, C.~Fowlkes, D.~Tal, and J.~Malik.
\newblock A database of human segmented natural images and its application to evaluating segmentation algorithms and measuring ecological statistics.
\newblock In {\em Proceedings Eighth IEEE International. ICCV 2001}, 2001.

\bibitem{mbakam2024empirical}
C.~K. Mbakam, J.-F. Giovannelli, and M.~Pereyra.
\newblock Empirical bayesian image restoration by langevin sampling with a denoising diffusion implicit prior.
\newblock {\em arXiv preprint arXiv:2409.04384}, 2024.

\bibitem{neumann1935haarschen}
J.~Neumann.
\newblock Zum haarschen mass in topologischen gruppen.
\newblock {\em Compositio Mathematica}, 1:106--114, 1935.

\bibitem{paeth1990fast}
A.~W. Paeth.
\newblock A fast algorithm for general raster rotation.
\newblock In {\em Graphics gems}, pages 179--195. 1990.

\bibitem{pesquet2021learning}
J.-C. Pesquet, A.~Repetti, M.~Terris, and Y.~Wiaux.
\newblock Learning maximally monotone operators for image recovery.
\newblock {\em SIAM Journal on Imaging Sciences}, 2021.

\bibitem{renaud2024plug}
M.~Renaud, J.~Prost, A.~Leclaire, and N.~Papadakis.
\newblock Plug-and-play image restoration with stochastic denoising regularization.
\newblock In {\em ICML}, 2024.

\bibitem{romano2017little}
Y.~Romano, M.~Elad, and P.~Milanfar.
\newblock The little engine that could: Regularization by denoising (red).
\newblock {\em SIAM Journal on Imaging Sciences}, 10(4):1804--1844, 2017.

\bibitem{rudin1992nonlinear}
L.~I. Rudin, S.~Osher, and E.~Fatemi.
\newblock Nonlinear total variation based noise removal algorithms.
\newblock {\em Physica D: nonlinear phenomena}, 60(1-4):259--268, 1992.

\bibitem{ryu2019plug}
E.~Ryu, J.~Liu, S.~Wang, X.~Chen, Z.~Wang, and W.~Yin.
\newblock Plug-and-play methods provably converge with properly trained denoisers.
\newblock In {\em ICML}. PMLR, 2019.

\bibitem{sun2019block}
Y.~Sun, J.~Liu, and U.~Kamilov.
\newblock Block coordinate regularization by denoising.
\newblock {\em Advances in Neural Information Processing Systems}, 32, 2019.

\bibitem{tachella2023equivariantbootstrappinguncertaintyquantification}
J.~Tachella and M.~Pereyra.
\newblock Equivariant bootstrapping for uncertainty quantification in imaging inverse problems.
\newblock {\em arXiv}, 2023.

\bibitem{tang2020fast}
J.~Tang and M.~Davies.
\newblock A fast stochastic plug-and-play admm for imaging inverse problems.
\newblock {\em arXiv preprint arXiv:2006.11630}, 2020.

\bibitem{terris2024equivariant}
M.~Terris, T.~Moreau, N.~Pustelnik, and J.~Tachella.
\newblock Equivariant plug-and-play image reconstruction.
\newblock {\em ArXiv}, pages 25255--25264, 2024.

\bibitem{venkatakrishnan2013plug}
S.~V. Venkatakrishnan, C.~A. Bouman, and B.~Wohlberg.
\newblock Plug-and-play priors for model based reconstruction.
\newblock In {\em IEEE GlobalSIP}, pages 945--948. IEEE, 2013.

\bibitem{wei2024learning}
D.~Wei, P.~Chen, and F.~Li.
\newblock Learning pseudo-contractive denoisers for inverse problems.
\newblock {\em ICML}, 2024.

\bibitem{zhang2021plug}
K.~Zhang, Y.~Li, W.~Zuo, L.~Zhang, L.~Van~Gool, and R.~Timofte.
\newblock Plug-and-play image restoration with deep denoiser prior.
\newblock {\em IEEE TPAMI}, 44(10):6360--6376, 2021.

\end{thebibliography}
\bibliographystyle{abbrv}

\newpage
\appendix

\section{Additional experiments}\label{sec:additional_expe}
\paragraph{Denoising performances}
In order to understand the denoising benefit of $\pi$-equivariance, we study the denoising performance of the equivariant denoiser as defined in Equation~\eqref{eq:equiv_denoiser}. On table~\ref{table:denoising_performance}, we present the performance of each denoiser on natural images from the dataset CBSD68~\cite{martin2001} with various levels of noise. When the set of transformation is infinite, we take a Monte-Carlo approximation of Equation~\eqref{eq:equiv_denoiser} with $10$ random transformations. Note that the denoising performance are similar. Performance with subpixel rotation denoiser are slightly lower, which suggests that the image distribution might not be $\pi$-equivariant to subpixel rotation. Observing that the denoising results are similar with all approaches suggests that GS-DRUNet has already learned these equivariances. Therefore, the practical benefit of ERED may lie more in enforcing $\pi$-equivariance online rather than in introducing additional prior knowledge into the denoiser.

\begin{table}[]\label{table:denoising_performance}
\centering
\begin{tabular}{|c|c|c|c|}
\hline
Denoising method & PNSR,  & PNSR & PNSR\\
&$\sigma = 5 / 255$&$\sigma = 10 / 255$&$\sigma = 20 / 255$\\
\hline
Simple denoising & 40.54 & 36.46 & 32.73 \\
\hline
Rotation denoising & 40.58 & 36.49 & 32.76 \\
\hline
Translation denoising & 40.53 & 36.44 & 32.71\\ 
\hline
Subpixel Rotation denoising & 40.34 & 36.26 & 32.56\\
\hline
Flip denoising  & 40.58 & 36.49 & 32.76\\
\hline
\end{tabular}\vspace*{0.3cm}
\caption{Denoising results on the CBSD68 dataset with various level of noise. \textit{Simple denoising} refers to an application of the GS-DRUNet denoiser~\cite{hurault2022gradient}, \textit{rotation denoising} to the average of the denoising of the $4$ rotated images, \textit{flip denoising} to the average of the denoising of the $4$ flip images, \textit{translation denoising} to the average of the denoising of the $10$ random translated images and \textit{subpixel rotation denoising} to the average of the denoising of the $10$ random subpixel rotated images.}
\label{tab:my_label}
\end{table}

\paragraph{Super-resolution}
On Table~\ref{tab:sr} and Figure~\ref{fig:sr}, we present the super-resolution results with a super-resolution factor of $2$ for RED and ERED on natural images extracted form the CBSD68 dataset. As Table~\ref{table:deblurring} suggests that flips and rotations are the best $\pi$-equivariance, we focus on these transformations for this experiment. We considered $8$ blur kernels including motion and fixed kernels, with a noise level of $\sigma_{y} = 1/255$. We run $200$ iterations of each algorithm with a step-size of $\delta = 2.0$, a regularization parameter $\lambda = 0.07$ for RED and $\lambda = 0.05$ for ERED and a denoising parameter $\sigma = 11/255$ for RED and $\sigma = 13/255$ for ERED. A grid search has been made to find the optimal parameters in term on PSNR for each method.

On Table~\ref{tab:sr} on Figure~\ref{fig:sr}, we observe that the restoration performances are similar both quantitatively and qualitatively for RED and ERED. ERED does not seem to be useful for image super-resolution.

\begin{table}[]
\centering
\begin{tabular}{|c|c|c|c|}
\hline
Restoration method & PNSR $\uparrow$ & SSIM $\uparrow$ & N $\downarrow$\\
\hline
Bicubic & 25.47 & 0.72 & 200 \\
\hline
RED & 27.97 & 0.80 & 200\\
\hline
ERED Flip & 28.00 & 0.80 & 200\\
\hline
ERED Rotation & 28.01 & 0.80 & 200\\
\hline
\end{tabular}
\vspace*{0.2cm}
\caption{Super-resolution with a super-resolution factor of $2$ and $8$ different blur kernel (including fixed and motion blur) results on the CBSD10 ($10$ images from CBSD68) dataset with various restoration methods.\vspace*{-0.2cm}}
\label{tab:sr}
\end{table}

\begin{figure*}[!ht]
    \centering
    \includegraphics[width=\textwidth]{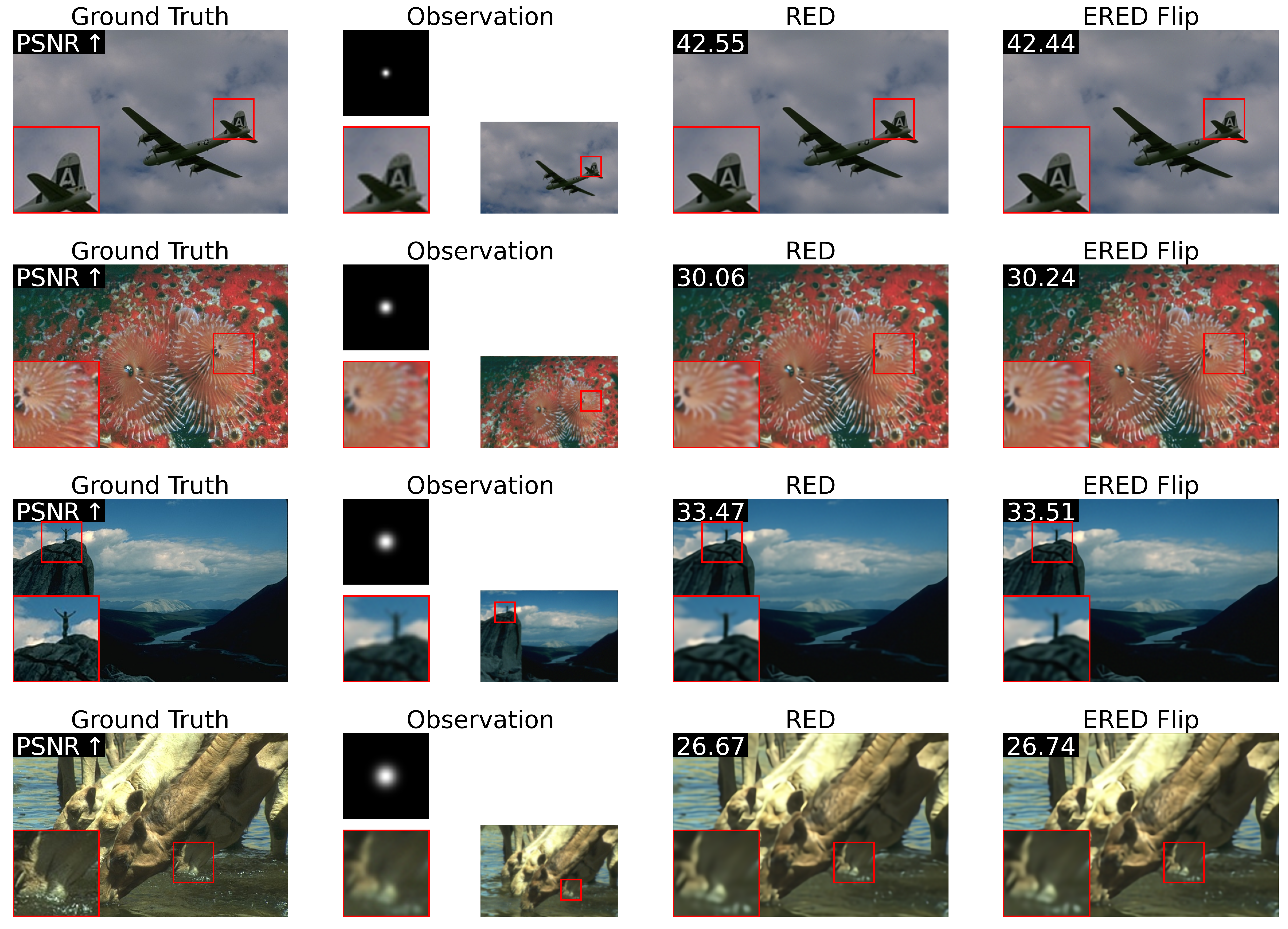}\vspace*{-0.2cm}
    \caption{Super-resolution with RED and ERED with a super-resolution factor of $2$ with a GS-denoiser trained on natural images. The set of transformation for ERED is random flip. Qualitative results of ERED and RED are very similar.  \vspace*{-0.2cm}
    }
    \label{fig:sr}
\end{figure*}

\paragraph{Despeckling}
On Table~\ref{tab:despeckle} and Figure~\ref{fig:despeckle}, we present the result of RED and ERED for Synthetic Aperture Radar (SAR) images despeckling. The speckle noised is multiplicative and implies a data-fidelity term that is not $L$-smooth. Therefore, it is known to be an inverse problem that is harder to tackle than Gaussian noise. We use for this experiment a dataset of SAR images presented in~\cite{dalsasso2021if}. The test image (\textit{lely}) of this dataset has been cropped into $60$ images of size $256 \times 256$ to create to our experimental dataset. The GS-Denoiser has been trained with the training images of this dataset with the parameters recommended in~\cite{hurault2022gradient}. Algorithms are run with $100$ iterations, a step-size $\delta = 0.01$, a denoiser parameter $\sigma = 8/255$, and a regularization parameter $\lambda = 100$. A grid search has been made to find the optimal parameters in term on PSNR for each method.

The PSNR values provided in Table~\ref{tab:despeckle} show that random rotations degrade restoration performances, whereas flips are beneficial.  This last observation is confirmed in Figure~\ref{fig:despeckle}, which illustrates that flips reduce the number of artifacts. Therefore, flip equivariance appears to be advantageous for despeckling.

\begin{table}[]
\centering
\begin{tabular}{|c|c|c|c|}
\hline
Restoration method & PNSR $\uparrow$ & SSIM $\uparrow$ & N $\downarrow$ \\
\hline
RED & 35.28 & 0.94 & 100\\
\hline
ERED Rotation & 35.19 & 0.94 & 100\\
\hline
ERED Flip & 35.59 & 0.94 & 100\\
\hline
\end{tabular}\vspace*{0.2cm}
\caption{Despeckle results on $60$ SAR images with various restoration methods. The number of looks is $L = 50$.}
\label{tab:despeckle}
\end{table}

\begin{figure*}[!ht]
    \centering
    \includegraphics[width=\textwidth]{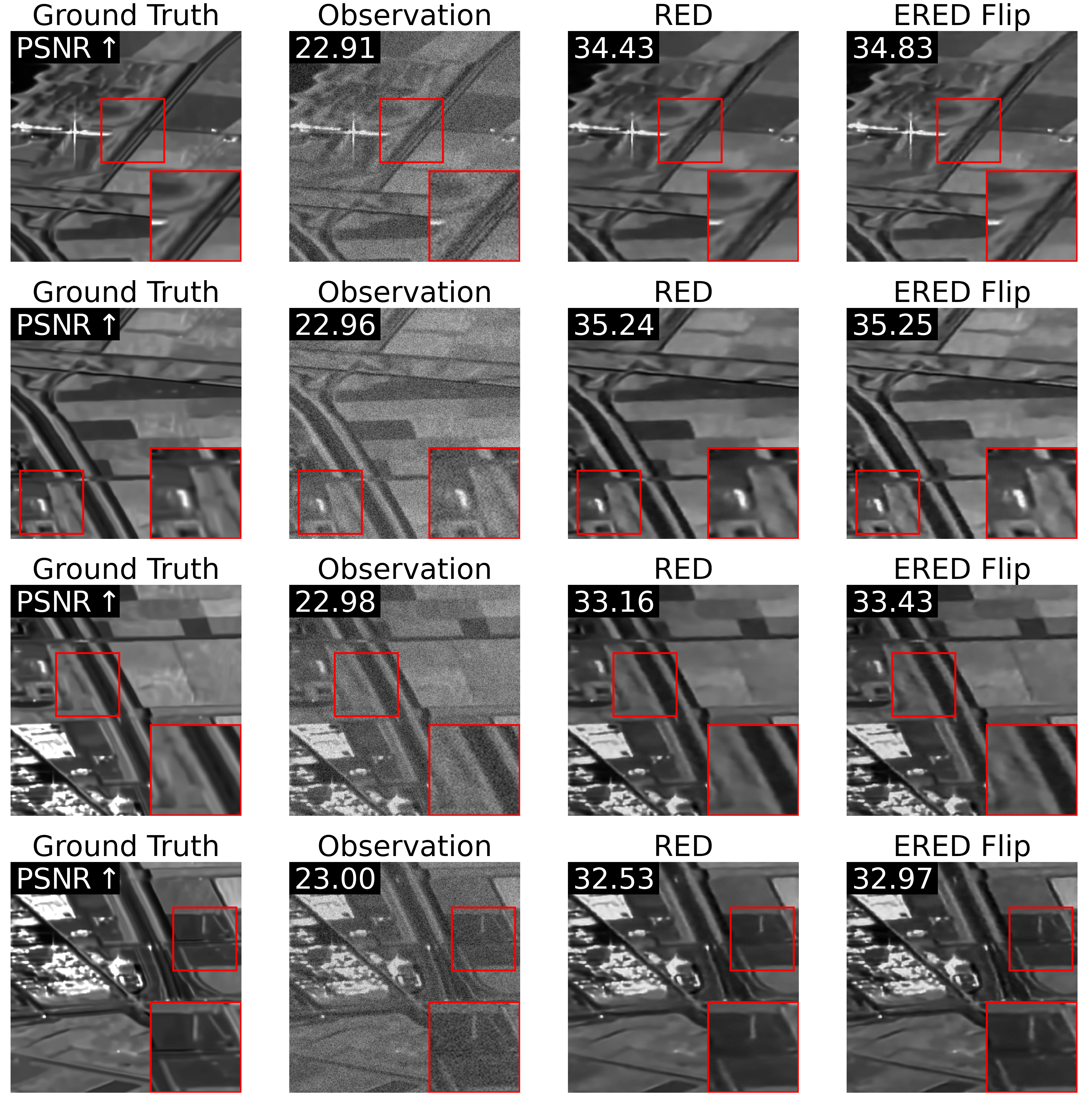}
    \caption{Despeckling with RED and ERED with a number of look of $L = 50$ with a GS-denoiser trained on SAR images. The set of transformation for ERED is random flip.  ERED produces a better qualitative result than RED.  \vspace*{-0.4cm}
    }
    \label{fig:despeckle}
\end{figure*}

\paragraph{Deblurring with different denoisers}
\textbf{GS-Denoiser} 
With the GS-Denoiser (Table~\ref{table:deblurring}, Table~\ref{tab:deblurring_deblurring_various_denoisers} and Figure~\ref{fig:restauration}) built on the DRUNet architecture~\cite{zhang2021plug}, we choose a step-size of $\delta = 1.5$, a regularization parameter $\lambda = 0.17$ for ERED, $\lambda = 0.15$ for RED, $\lambda = 0.5$ for SNORE, a denoiser parameter $\sigma = 8/255$ for ERED, $\sigma = 7/255$ for RED and $\sigma = 5/255$ for SNORE. A grid search has been made to find the optimal parameters in term of PSNR for each method. SNORE parameters have been chosen according to the recommendation in~\cite{renaud2024plug}. Experiments are made on deblurring with $\sigma_y = 5/255$ on $10$ various blur kernels including fixed and motion kernels.

With this denoiser the maximal gain of ERED compared to RED is $+0.3$dB.

\textbf{DRUNet}
On Table~\ref{tab:deblurring_deblurring_various_denoisers}, we also provide the quantitative results of RED and ERED (with flips and rotations) on natural images with the original DRUNet denoiser. We use a DRUNet with the pre-trained weights proposed the python librairie \textit{Deepinv}. This denoiser was trained with $\sigma \in [0,20]/255$. Experiments are made on deblurring with $\sigma_y = 5/255$ on $10$ various blur kernels including fixed and motion kernels. We use a step-size $\delta = 1.5$, a regularization weight $\lambda = 0.12$ for RED, $\lambda = 0.15$ for ERED with rotation and $\lambda = 0.14$ for ERED with flips. The denoiser parameter is $\sigma = 9/255$ for RED and $\sigma = 8/255$ for ERED. A grid search has been made to find the optimal parameters in term on PSNR for each method.

We observe a modest improvement of approximately $+0.2$ dB in equivariant methods compared to RED.

\textbf{DnCNN}
On Table~\ref{tab:deblurring_deblurring_various_denoisers}, we finally present the quantitative results of RED and ERED (with flips and rotations) on natural images. We use a DnCNN with the pre-trained weights shared by~\cite{pesquet2021learning}. This denoiser was trained with $\sigma = 2/255$. Therefore, experiments are made on deblurring with $\sigma_y = 1/255 < \sigma$ and $10$ various blur kernels including fixed and motion kernels. We use a step-size $\delta = 2.0$, a regularization weight $\lambda = 0.11$ and a denoiser parameter $\sigma = 2/255$ for every algorithm. A grid search has been made to find the optimal parameters in term on PSNR for each method.

We note a slight improvement of $+0.1$ dB for equivariant methods compared to RED.

These deblurring experiments with various denosiers suggest that the slight improvement brought by equivariance does not appear to be attributable to the training strategy or the denoiser architecture.
Hence contrary to the results provided in~\cite{terris2024equivariant}, we observed that the practical benefit of equivariance is not significant in our  numerical experiments.

\begin{table}[]
\centering
\begin{tabular}{|c|c|c|c|c|}
\hline
Denoiser & Restoration method & PNSR $\uparrow$ & SSIM $\uparrow$ & N $\downarrow$ \\
\hline
\multirow{3}{*}{GS-DRUNet ($\sigma_y = 5/255$)}  & RED & 32.25 & 0.84 & 400 \\
& ERED rotation & 32.53 & 0.85 & 400 \\
& ERED flip & 32.51 & 0.85 & 400 \\
\hline
\multirow{3}{*}{DRUNet ($\sigma_y = 5/255$)}  & RED & 29.24 & 0.81 & 400 \\
& ERED rotation & 29.48 & 0.83 & 400 \\
& ERED flip & 29.44 & 0.82 & 400 \\
\hline
\multirow{3}{*}{DnCNN ($\sigma_y = 1/255$)}  & RED & 35.26 & 0.94 &400 \\
& ERED rotation & 35.34 & 0.94 & 400 \\
& ERED flip & 35.32 & 0.94 &400 \\
\hline
\end{tabular}
\vspace*{0.2cm}
\caption{Deblurring results on CBSD10 ($10$ images extracted from CBSD68 dataset) with $10$ kernels of blur (including fixed and motion blur) with different pre-trained denoisers. It is worth noting that the quantitative improvement with equivariance is approximately $+0.2$ dB for each type of denoiser.}
\label{tab:deblurring_deblurring_various_denoisers}
\end{table}

\section{Technical proofs}
\subsection{Main technical result}\label{app:thm}
For completeness of the paper, here we recall Theorem 2.1 (ii) from~\cite{tadic2017asymptotic}. This theorem tackles the convergence of a sequence $x_n$ defined by a biased stochastic gradient descent algorithm, i.e. there exists $f:\R^d \to \R^d$ differentiable, such that 
$$x_{k+1} = x_k - \delta_k (\nabla f(x_k) + \xi_k), $$
with $\delta_k > 0$ the step-size and $\xi_k$ the bias noise.

\begin{assumption}\label{ass:tadic}
    \textbf{(a)} $\lim_{k\to +\infty}\delta_k = 0$ and $\sum_{k=0}^{+\infty} \delta_k = +\infty$. 
    
    \textbf{(b)} $\xi_k$ admits the decomposition $\xi_k = \zeta_k + \eta_k$ for all $k \ge 0$ that satisfy
    $$ \lim_{k\to +\infty} \max_{k\le n < a(k,t)} \| \sum_{i=k}^n \delta_k \zeta_k \| = 0, \limsup_{k \to +\infty}{\|\eta_k\|} < + \infty,$$
    almost surely on $\{\sup_{k \in \N}{\|x_k\|} < + \infty\}$. Where $a(k,t)$ is defined for $t> 0$ by $a(k,t) = \max \{  n \le k | \sum_{i=k}^{n-1} \delta_k \le t \}$.

    \textbf{(c)} $f$ is $p$-times differentiable on $\R^d$ with $p > d$.
\end{assumption}
    
\begin{theorem}{\cite{tadic2017asymptotic}}\label{theorem:tadic}
    Under Assumption~\ref{ass:tadic}, for a compact $Q \subset \R^d$ there exists a real number $K_{Q} > 0$ (depending only of $f$) such that it holds almost surely on $\lambda_Q = \{x_k \in Q | \forall k \in \N\}$  that
    \begin{equation}
        \limsup_{k \to + \infty} \|\nabla f(x_k)\| \le K \eta^{\frac{q}{2}},~~\limsup_{k \to + \infty} f(x_k) - \liminf_{k \to + \infty} f(x_k) \le K \eta^{q},
    \end{equation}
     with $q = \frac{p-d}{p-1}$ and $\eta = \limsup_{k \to +\infty} \|\eta_k\|$.
\end{theorem}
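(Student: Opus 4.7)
The plan is to adapt the ODE method of stochastic approximation to the biased, non-convex, $C^p$ setting: one views the recursion $x_{k+1} = x_k - \delta_k(\nabla f(x_k) + \zeta_k + \eta_k)$ as a noisy perturbation of the gradient flow $\dot{x} = -\nabla f(x)$, and then converts a bound on the asymptotic bias $\eta = \limsup_k \|\eta_k\|$ into quantitative H\"older-type rates on both $\|\nabla f(x_k)\|$ and the oscillation of $f(x_k)$.

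I would first handle the martingale/fluctuation part. On the event $\lambda_Q$ the iterates lie in the compact $Q$, so $\nabla f$ is bounded and Lipschitz on $Q$. Assumption~\ref{ass:tadic}(b) states directly that $\max_{k \le n < a(k,t)} \|\sum_{i=k}^n \delta_i \zeta_i\|$ vanishes almost surely, so the stochastic contribution is negligible over any ODE-time window of fixed length $t$. Combined with a second-order Taylor expansion of $f$ at $x_k$, this produces a perturbed descent inequality
\begin{equation*}
f(x_{k+1}) \le f(x_k) - \delta_k \|\nabla f(x_k)\|^2 + \delta_k \|\nabla f(x_k)\|\, \|\eta_k\| + O(\delta_k^2) + \text{martingale increment}.
\end{equation*}
Summing along a window of total step-length $t$, $f$ can rise by at most $O(\eta t)$ up to vanishing error, while any stretch on which $\|\nabla f(x_k)\|$ stays above a threshold $\tau$ forces $f$ to drop at rate $\tau^2 - O(\tau \eta)$. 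This already delivers coarse linear-in-$\eta$ conclusions, namely $\liminf_k \|\nabla f(x_k)\| \lesssim \eta$ together with an $O(\eta)$ oscillation of $f(x_k)$.

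The main difficulty is sharpening these coarse bounds to the H\"older rates $\eta^{q/2}$ and $\eta^{q}$ with $q = (p-d)/(p-1)$. This exponent is the fingerprint of a quantitative Morse--Sard theorem: for $f \in C^p$ with $p > d$, the set of critical values $f(\{\nabla f = 0\})$ has Lebesgue measure zero, and refined Hausdorff-type versions give H\"older control on the thickness of near-critical value sets as a function of the gradient threshold. Using this, I would argue by contradiction that if $\|\nabla f(x_{k_n})\| \ge C \eta^{q/2}$ along a subsequence, the trajectory must traverse level sets whose combined $f$-extent exceeds the Morse--Sard budget available under a bias of size $\eta$; similarly, an oscillation larger than $C\eta^{q}$ would force excursions into regions where $\|\nabla f\|$ is too large to be consistent with the first inequality. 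Reproducing the quantitative Morse--Sard estimate and threading it through the stochastic descent analysis is the technical heart of the proof and the main obstacle; by contrast the stochastic-approximation ingredients (martingale control via Assumption~\ref{ass:tadic}(b), the perturbed descent lemma, and stability on $\lambda_Q$) are by now standard.
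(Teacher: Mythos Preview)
The paper does not prove this theorem. Theorem~\ref{theorem:tadic} is simply recalled from~\cite{tadic2017asymptotic} for completeness (see the opening sentence of Appendix~\ref{app:thm}: ``For completeness of the paper, here we recall Theorem 2.1 (ii) from~\cite{tadic2017asymptotic}''), and is then invoked as a black box in the proofs of Propositions~\ref{prop:convergence_unbiased} and~\ref{prop:convergence_biaised}. There is therefore no ``paper's own proof'' to compare against.

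For what it is worth, your sketch is broadly faithful to the strategy of the original reference: the ODE/stochastic-approximation comparison on compact sets, the perturbed descent inequality, and the identification of the exponent $q=(p-d)/(p-1)$ as coming from a quantitative Morse--Sard / Yomdin-type estimate on the size of near-critical value sets are precisely the ingredients used in~\cite{tadic2017asymptotic}. You correctly flag the quantitative Morse--Sard step as the non-standard technical core, while the stochastic-approximation machinery (martingale control, boundedness on $\lambda_Q$) is indeed routine. But as far as this paper is concerned, none of that work is reproduced; the result is taken off the shelf.
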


\subsection{Proof of Lemma~\ref{lemma:bounded_variance_1}}\label{sec:proof_bounded_variance_1}

By Assumption~\ref{ass:prior_score_approx}-\ref{ass:finite_moment}, and the inequality $\forall x,y \in \R^+, (x+y)^2 \le 2(x^2+y^2)$, we have
\begin{align*}
  &\eE(\|\xi_k\|^2 | x_k)\\
   =& \lambda^2 \eE(\|  \fJ_{G}^T(x_k) \nabla \log p_{\sigma}(G(x_k)) - \eE_{G \sim \pi}\left(  \fJ_{G}^T(x_k) \nabla \log p_{\sigma}(G(x_k)) \right)\|^2 | x_k) \\
   =& \lambda^2 \eE\left(\|  \fJ_{G}^T(x_k) \nabla \log p_{\sigma}(G(x_k))\|^2| x_k\right) - \lambda^2 \|\eE_{G \sim \pi}\left(  \fJ_{G}^T(x_k) \nabla \log p_{\sigma}(G(x_k)) | x_k \right)\|^2  \\
    \le &\lambda^2 \eE(\|  \fJ_{G}^T(x_k) \nabla \log p_{\sigma}(G(x_k))\|^2 | x_k) \\
    \le& \lambda^2 \eE(\vvvert  \fJ_{G}^T(x_k) \vvvert^2 \| \nabla \log p_{\sigma}(G(x_k))\|^2 | x_k) \\
    \le& \lambda^2 \eE(\vvvert  \fJ_{G}(x_k) \vvvert^2 B^2 \sigma^{2 \beta} \left( 1 + \|G(x_k)\|^{n_1} \right)^2 | x_k) \\
    \le& 2 \lambda^2 B^2 \sigma^{2 \beta} \eE(\vvvert  \fJ_{G}(x_k)\vvvert^2 \left( 1 + \|G(x_k)\|^{2n_1} \right) | x_k) \\
    \le& 2 \lambda^2 B^2 \sigma^{2 \beta} \left( \eE(\vvvert  \fJ_{G}(x_k)\vvvert^2| x_k) + \eE(\vvvert  \fJ_{G}(x_k)\vvvert^2 \|G(x_k)\|^{2n_1} | x_k)  \right).
\end{align*}
By using Young inequality, i.e. $\forall x, y \in\R^+, p, q > 1$ such that $\frac{1}{p}+ \frac{1}{q} = 1$, $|xy|\le \frac{x^p}{p} + \frac{y^q}{q}$, with $p= 1+\frac{\epsilon}{2}$, we get for $p = \frac{m_{n_1, \epsilon}}{\frac{2n_1(2+\epsilon)}{\epsilon}}$ with $m_{n_1, \epsilon} = \lceil \frac{2n_1(2+\epsilon)}{\epsilon} \rceil$
\begin{align*}
&\eE(\|\xi_k\|^2 | x_k) \\
    \le& 2 \lambda^2 B^2 \sigma^{2 \beta} \left( \frac{4}{2+\epsilon} \eE(\vvvert  \fJ_{G}(x_k)\vvvert^{2+\epsilon}| x_k) + \frac{\epsilon}{2+\epsilon}  + \frac{\epsilon}{2+\epsilon}\eE( \|G(x_k)\|^{\frac{2n_1(2+\epsilon)}{\epsilon}} | x_k)  \right)\\
    \le& 2 \lambda^2 B^2 \sigma^{2 \beta} \Bigl( \frac{4M_{2+\epsilon}+\epsilon}{2+\epsilon}   + \frac{1}{m_{n_1,\epsilon}} \Bigl( 2n_1 \eE( \|G(x_k)\|^{m_{n_1,\epsilon}} | x_k) + m_{n_1,\epsilon} \\&- \frac{2n_1(2+\epsilon)}{\epsilon} \Bigr) \Bigr)\\
    \le& 2 \lambda^2 B^2 \sigma^{2 \beta} \left( \frac{4M_{2+\epsilon}+\epsilon}{2+\epsilon}   + \frac{1}{m_{n_1,\epsilon}} \left( 2n_1 C_{\K, m_{n_1, \epsilon}} + m_{n_1,\epsilon} - \frac{2n_1(2+\epsilon)}{\epsilon} \right) \right) := C ,
\end{align*}
with $C< +\infty$ a constant independent of $k$ and $x_k$. This prove Lemma~\ref{lemma:bounded_variance_1} by taking the expectation on $x_k$ and the law of total expectation.

\subsection{Proof of Lemma~\ref{lemma:bounded_variance_2}}\label{sec:proof_bounded_variance_2}
Using the definition of $\xi_k$, Assumptions~\ref{ass:finite_moment}-\ref{ass:denoiser_sub_polynomial}, and Young inequality, we have, almost surely on $\Lambda_{\K}$,
\begin{align*}
    &\eE(\|\gamma_k\|^2|x_k) = \eE(\|\xi_k - \eE(\xi_k)\|^2|x_k) \\
    \le& \frac{\lambda^2}{\sigma^4} \eE\left( \| \fJ_{G}^T(x_k) \left( G(x_k) - D_{\sigma}(G(x_k))\right) \|^2|x_k \right) \\
    \le& \frac{\lambda^2}{\sigma^4} \eE\left( \vvvert\fJ_{G}(x_k) \vvvert^2 \|  G(x_k) - D_{\sigma}(G(x_k)) \|^2|x_k \right) \\
    \le &\frac{2\lambda^2}{\sigma^4} \eE\left( \vvvert\fJ_{G}(x_k) \vvvert^2 \left( \|G(x_k)\|^2 + \|D_{\sigma}(G(x_k)) \|^2 \right)|x_k \right) 
    \\\le &\frac{2\lambda^2}{\sigma^4} \eE\left( \vvvert\fJ_{G}(x_k) \vvvert^2 \left( \|G(x_k)\|^2 + 2 C^2 (1 + \|G(x_k) \|^{2n_2} \right)|x_k \right) \\
    \le& \frac{2\lambda^2}{\sigma^4} \Bigl( 2 C^2 \eE\left(\vvvert\fJ_{G}(x_k) \vvvert^2\right) + \eE\left(\vvvert\fJ_{G}(x_k) \vvvert^2 \|G(x_k)\|^2 \right)\\&+ 2 C^2 \eE\left(\vvvert\fJ_{G}(x_k) \vvvert^2 \|G(x_k)\|^{2n_2} \right)  \Bigr) \\
    \le &\frac{2\lambda^2}{\sigma^4} \bigg( \frac{4 C^2}{2+\epsilon} \eE\left(\vvvert\fJ_{G}(x_k)\vvvert^{2+\epsilon}\right) +\frac{\epsilon}{2+\epsilon} + \frac{2}{2+\epsilon} \eE\left(\vvvert\fJ_{G}(x_k) \vvvert^{2+\epsilon}\right)  \\
    & +\frac{\epsilon}{2+\epsilon}\eE\left( \|G(x_k)\|^{\frac{2(2+\epsilon)}{\epsilon}} \right)+ \frac{4 C^2}{2+\epsilon} \eE\left(\vvvert\fJ_{G}(x_k) \vvvert^{2+\epsilon} \right)\\& + \frac{2C^2\epsilon}{2+\epsilon} \eE\left( \|G(x_k)\|^{\frac{2n_2(2+\epsilon)}{\epsilon}} \right)\hspace{-0.1cm}\bigg) \\
    \le &\frac{2\lambda^2}{\sigma^4} \bigg( \frac{(8 C^2 + 2)M_{2+\epsilon}+\epsilon}{2+\epsilon}  +\frac{2}{\lceil c_{\epsilon} \rceil}\eE\left( \|G(x_k)\|^{\lceil c_{\epsilon} \rceil} \right) + \frac{\lceil c_{\epsilon} \rceil - c_{\epsilon}}{\lceil c_{\epsilon} \rceil} \\&+ \frac{4 n_2 C^2}{\lceil n_2 c_{\epsilon} \rceil} \eE\left( \|G(x_k)\|^{\lceil n_2 c_{\epsilon} \rceil} \right) + \frac{\lceil n_2 c_{\epsilon} \rceil - n_2 c_{\epsilon}}{\lceil n_2 c_{\epsilon} \rceil}  \bigg) \\
    \le &\frac{2\lambda^2}{\sigma^4} \hspace{-1pt}\bigg( \hspace{-1pt}\frac{(8 C^2 + 2)M_{2+\epsilon}+\epsilon}{2+\epsilon} \hspace{-1pt}  +\hspace{-1pt}\left(\frac{2}{\lceil c_{\epsilon} \rceil}\hspace{-1pt}+ \hspace{-1pt}\frac{4 n_2 C^2}{\lceil n_2 c_{\epsilon} \rceil}\right)\hspace{-1pt}C_{\K, \lceil c_{\epsilon} \rceil}\hspace{-1pt}+ \hspace{-1pt}\frac{\lceil c_{\epsilon} \rceil - c_{\epsilon}}{\lceil c_{\epsilon} \rceil} \hspace{-1pt}\\&+ \hspace{-1pt}\frac{\lceil n_2 c_{\epsilon} \rceil - n_2 c_{\epsilon}}{\lceil n_2 c_{\epsilon} \rceil}  \bigg) \\:= &C_2 < +\infty,
\end{align*}
with $c_{\epsilon} = \frac{2(2+\epsilon)}{\epsilon}$. This proves Lemma~\ref{lemma:bounded_variance_2} by taking the expectation on $x_k$.

\end{document}